\DeclareMathOperator{\tr}{Tr}
\newcommand{\ie}{\textit{i.e.}}
\newcommand{\be}{\begin{equation}}
\newcommand{\ee}{\end{equation}}
\newcommand{\sket}[1]{{\ensuremath{\lvert#1\rangle}}}
\newcommand{\lket}[1]{{\ensuremath{\left\lvert#1\right\rangle}}}
\newcommand{\ket}[1]{\if@display\lket{#1}\else\sket{#1}\fi}
\newcommand{\ketMode}[2]{{\ensuremath{\lvert #1\rangle _{#2}}}}
\newcommand{\sbra}[1]{{\ensuremath{\langle#1\rvert}}}
\newcommand{\lbra}[1]{{\ensuremath{\left\langle#1\right\rvert}}}
\newcommand{\bra}[1]{\if@display\lbra{#1}\else\sbra{#1}\fi}
\newcommand{\braMode}[2]{{\ensuremath{\langle #1\rvert _{#2}}}}
\newcommand{\sbraket}[2]{{\ensuremath{\langle#1\rvert#2\rangle}}}
\newcommand{\lbraket}[2]{{\ensuremath{\left\langle#1\!\left\rvert\vphantom{#1}#2\right.\!\right\rangle}}}
\newcommand{\braket}[2]{\if@display\lbraket{#1}{#2}\else\sbraket{#1}{#2}\fi}
\newcommand{\sketbra}[2]{{\ensuremath{\lvert #1\rangle\!\langle #2\rvert}}}
\newcommand{\lketbra}[2]{{\ensuremath{\left\lvert #1\right\rangle\!\!\left\langle #2\right\rvert}}}
\newcommand{\ketbra}[2]{\if@display\lketbra{#1}{#2}\else\sketbra{#1}{#2}\fi}
\newcommand{\ketbraMode}[3]{{\ensuremath{\lvert #1\rangle _{#3}\langle #2\rvert}}}
\newcommand{\proj}[1]{\ketbra{#1}{#1}}
\theoremstyle{nonumberplain} 
\newtheorem*{claim*}{Claim}
\newtheorem*{cor*}{Corollary}
\theoremstyle{definition}
\begin{document}

\title{Beating the repeaterless bound with adaptive measurement-device-independent quantum key distribution}
\begin{abstract}
Surpassing the repeaterless bound is a crucial task on the way towards realizing long-distance quantum key distribution. In this paper, we focus on the protocol proposed by Azuma \textit{et} \textit{al.} in [Nature Communications 6, 10171 (2015)], which can beat this bound with idealized devices. We investigate the robustness of this protocol against imperfections in realistic setups, particularly the multiple-photon pair components emitted by practical entanglement sources. In doing so, we derive necessary conditions on the photon-number statistics of the sources in order to beat the repeaterless bound. We show, for instance, that parametric down-conversion sources do not satisfy the required conditions and thus cannot be used to outperform this bound.
\end{abstract}
\author{R\'obert Tr\'enyi}
\affiliation{Escuela de Ingenier\'ia de Telecomunicaci\'on, Dept. of Signal Theory and Communications, University of Vigo, E-36310 Vigo, Spain}
\author{Koji Azuma}
\affiliation{NTT Basic Research Laboratories, NTT Corporation, 3-1 Morinosato Wakamiya, Atsugi, Kanagawa 243-0198, Japan}
\affiliation{NTT Research Center for Theoretical Quantum Physics, NTT Corporation, 3-1 Morinosato-Wakamiya, Atsugi, Kanagawa 243-0198, Japan}
\author{Marcos Curty}
\affiliation{Escuela de Ingenier\'ia de Telecomunicaci\'on, Dept. of Signal Theory and Communications, University of Vigo, E-36310 Vigo, Spain}
\maketitle

\section{Introduction}

Quantum key distribution (QKD) protocols can provide two distant parties (Alice and Bob) with information-theoretically secure secret keys~\cite{GisinReview, NorbertReview, MarcosReview}. However, in point-to-point QKD via pure-loss bosonic channels, the achievable secret key rate is fundamentally limited by the so-called repeaterless bound~\cite{TGW, PLOB, AnotherPaperOnBounds}. In the limit of high channel loss (\ie , long distances) the repeaterless bound is proportional to the transmittance of the channel connecting Alice and Bob, denoted by $\eta$. This means that the secret key rate of any point-to-point QKD protocol scales at most with $\eta$. As in the case of optical fibers $\eta=e^{-L/L_{\rm{att}}}$, where $L$ is the distance between the parties and $L_{\rm{att}}$ is the attenuation length of the fiber, this poses a stringent limitation on the achievable secret key rate. Therefore, surpassing the repeaterless bound is an essential step towards efficient long-distance quantum communication.

A simple idea to outperform the repeaterless bound is to introduce intermediate nodes, dividing the channel into many smaller segments so that the probability of losing a signal stays relatively small on each segment. This naturally leads to the concept of quantum repeaters~\cite{BriegelRepeater, DuanRepeater, HybridRepeater, KokRepeater, JiangRepeater, SangouardRepeater,
KojiRepeater, MunroRepeater, KojiRepeater2}, which are typically based on entanglement swapping and distillation. However, to truly benefit from a quantum repeater, one needs many nodes and demanding technological resources, which makes the experimental realization very challenging with current technology~\cite{RepeaterDifficult}.

Another, recently proposed idea is twin-field QKD (TF-QKD)~\cite{LucamariniTfQkd}, which is based on single-photon interference and includes one intermediate node performing such conceptually simple interferometric measurement. Indeed, this offers a square root improvement in the scaling of the secret key rate. This proposal has triggered a lot of attention in the field and various simple security proofs and improved versions of the original protocol have been very recently introduced~\cite{MarcosTf, CuiTF, JieTF, kiyoshiTF, MaTF, WangTF, YinTF}. Proof-of-principle experiments to show the feasibility of some of the suggested TF-type QKD protocols have already been demonstrated experimentally~\cite{ProofOfPrincipleTF, ToshibaExpTF, PanTF, WangTFExp}, which may suggest the viability of this approach to achieve intercity QKD with current technology. 

The same square root improved scaling can be achieved if one extends the original measurement-device-independent QKD (MDI-QKD)~\cite{mdi} protocol, based on two-photon interference, with some feedback mechanism to make sure that the Bell state measurement (BSM) is performed between signals that actually survived the channel loss. One way is to make use of quantum memories~\cite{luong, abruzzo, panayi}, but the required memory parameters are still challenging for current technology~\cite{ posterQCrypt}.
To avoid the need for quantum memories while having the same square root improved scaling, Azuma \textit{et} \textit{al}. proposed the idea of a fully optical, adaptive MDI-QKD~\cite{koji} (AMDI-QKD) protocol, using standard optical teleportation for performing a quantum non-demolition measurement (QND)~\cite{qnd} to confirm the arrival of the single-photon signals at the middle node. While the required technology to implement the AMDI-QKD protocol is far off our current experimental capabilities, this protocol could offer higher secret key rates than the TF-QKD protocol, because the former is based on two-photon interference at the middle node while the latter is based on single-photon interference~\cite{HadamardOperation}.

The original AMDI-QKD scheme~\cite{koji} assumes highly idealized devices, like, for instance, perfect entanglement sources, which are capable of generating a perfect EPR pair on demand for the teleportation in the QND measurement, and perfect single-photon sources. In this paper, we investigate the robustness of this protocol against source imperfections, like for example a non-vanishing probability of emitting multiple-photon signals and thus introducing extra noise into the system. By performing a full-mode analysis of a realistic setup we derive a necessary condition on the photon-number statistics of the sources for overcoming the repeaterless bound in~\cite{PLOB}.

In doing so, we also show, for example, that with parametric down-conversion (PDC) sources, the AMDI-QKD protocol has a scaling of at most $\eta$, therefore unable to  beat the repeaterless bound. This is due to the fact that PDC sources have a too large probability of emitting multiple-photon pairs compared to the probability of emitting single-photon pairs.

We note that a similar behavior has also been observed in the context of the ensemble-based quantum memory assisted MDI-QKD~\cite{mohsen} protocol. Ensemble-based quantum memories have many favorable properties, but they inherently suffer from a non-negligible probability of emitting multiple-photons (similar to that of the PDC sources) causing the advantageous scaling offered by a traditional memory assisted system~\cite{luong, abruzzo, panayi} to vanish. In this regard, we remark that our result is stronger than that introduced in~\cite{mohsen} in the sense that it applies even with photon-number resolving (PNR) detectors, while~\cite{mohsen} assumes threshold detectors.

The paper is organised as follows. In Sec.~\ref{Sec:ProtocolDescription}, the investigated protocol is introduced and its secret key rate formula is presented. Sec.~\ref{Sec:ToolBox} describes mathematically the physical devices used for the implementation of the protocol. Next, in Sec.~\ref{Sec:NecessaryConditions} we present the main results of the paper. Here, we obtain necessary conditions on the applied entanglement sources for overcoming the repeaterless bound~\cite{PLOB} with the AMDI-QKD protocol. As a corollary, we prove that the protocol is not capable of beating the repeaterless bound~\cite{PLOB} using PDC sources. Lastly, Sec.~\ref{Sec:Conclusion} contains the conclusions of the paper. The paper also includes two appendices for providing the details of the calculations.

\section{The AMDI-QKD protocol}\label{Sec:ProtocolDescription}

\subsection{Protocol steps}\label{Sec:steps}
The schematic layout of the AMDI-QKD protocol~\cite{koji} can be seen in Fig.~\ref{layout}. The protocol runs as follows:

\begin{enumerate}
  \item Each of Alice and Bob generates $m$ signals with their on-demand entanglement sources $S_{\rm{AC}}$ and $S_{\rm{BC}}$, respectively. One mode of each signal is sent to Charlie's QND measurement simultaneously via the quantum channel, using a multiplexing technique (e.g. wavelength based). The other mode is kept by Alice and Bob and they measure it in the $Z$ or $X$ basis, which they choose with probabilities $p_{\rm{Z}}$ and $p_{\rm{X}}=1-p_{\rm{Z}}$, respectively. 
  \item Charlie applies QND measurements to the incoming pulses to confirm the arrival of the signals coming from Alice and Bob. 
  \item Charlie pairs the successfully arriving signals via optical switches and performs BSMs between signals coming from the different parties. To be more precise, if there are, say $n_{\rm{A}}$ ($n_{\rm{B}}$) successfully arriving signals from Alice (Bob), then Charlie performs $\min(n_{\rm{A}},n_{\rm{B}})$ BSMs. 
  \item Charlie announces to Alice and Bob which BSMs were successful, together with the measurement result obtained. Here the successful BSM is assumed to distinguish the Bell states $\ket{\psi^-}=1/\sqrt 2(\ket{HV}-\ket{VH})$ and $\ket{\phi^-}=1/\sqrt 2(\ket{HH}-\ket{VV})$ from the others, as we consider (see Fig.~\ref{BsmLayout} in Sec.~\ref{Sec:ToolBox}) a standard linear-optics implementation of the BSMs. Here $H$ ($V$) represents a horizontally (vertically) polarized single-photon state.
  \item For key generation, Alice and Bob post-select the events by communicating over an authenticated classical channel where they used the $Z$-basis to measure their modes in step 1 (and in which they both had a successful photon detection) and also the corresponding BSM was successful. To make sure that their key bits are identical, they apply a bit flip procedure~\cite{mdi}. To be precise, Alice or Bob flips her or his bits, except for the cases when they chose the $Z$ basis and Charlie's BSM outcome was the state $\ket{\phi^-}=1/\sqrt 2(\ket{HH}-\ket{VV})$.  
\end{enumerate}
 
\begin{figure}[H]
\centering  
  \scalebox{0.26}{\includegraphics{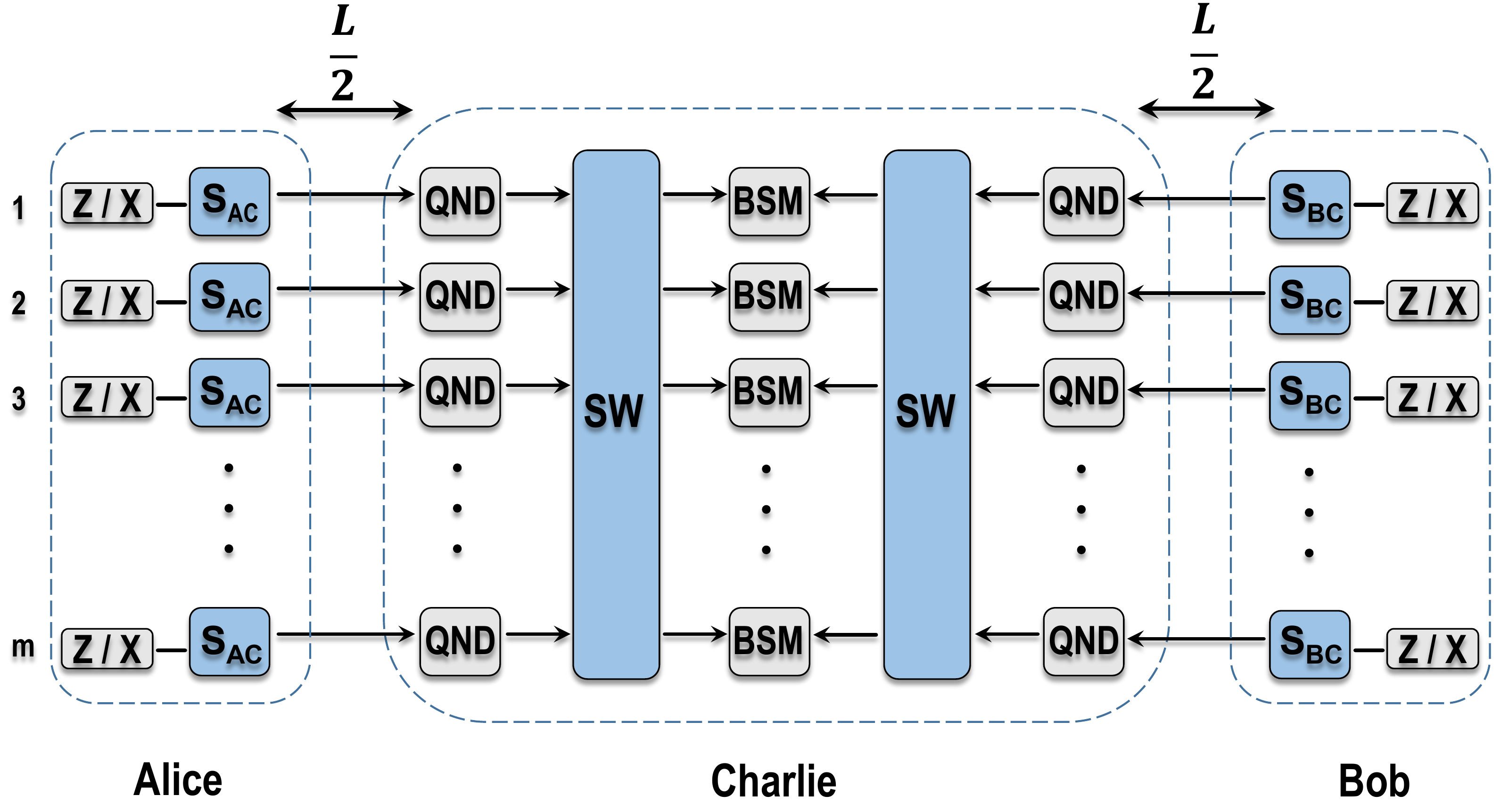}}
  \caption{Schematic layout of the AMDI-QKD protocol using the entanglement sources $S_{\rm{AC}}$ and $S_{\rm{BC}}$. The distance between Alice and Bob is denoted by $L$ and we assume that the untrusted middle node Charlie is located halfway between them. The parameter $m$ is the number of pulses that each party sends to Charlie. A QND measurement is performed by Charlie to confirm the arrival of the signals emitted by the $S_{\rm{AC}}$ and $S_{\rm{BC}}$ sources. Also, Alice and Bob measure their modes of the sources in the $Z$ or $X$-basis, which they select probabilistically, indicated in the figure by the symbol (Z/X). Conditioned on the success of the QND measurements, the surviving signals are paired via optical switches (SW) and Bell state measurements (BSM) are performed between all the pairs by Charlie.}\label{layout}
\end{figure}

\subsection{Secret key rate formula} 
The secret key rate formula for the protocol above has been derived in~\cite{koji} when the number of multiplexing $m$ tends to infinity. It reads:

\begin{equation}\label{skr}
R=p_{\rm{Z}}^2\,p_{\rm{s}}\,p_{\rm{BSM}}\,[1-f\,h(e_{\rm{Z}})-h(e_{\rm{X}})],
\end{equation} 
where $p_{\rm{s}}$ is the probability that Charlie's QND and the $Z$ measurements are both successful either at Alice's or Bob's site. The quantity $p_{\rm{BSM}}$ represents the success probability of one BSM. The quantities $e_{\rm{Z}}$ and $e_{\rm{X}}$, on the other hand denote the bit and phase error rates, respectively. The parameter $f$ is an inefficiency function for the error correction process and $h(x)=-x\log_2(x)-(1-x)\log_2(1-x)$ is the binary entropy function. We remark that the fact that in Eq.~\eqref{skr} only $p_{\rm{s}}$ rather than the square of it appears is due to the advantage that the original AMDI-QKD protocol offers, that is, BSMs are only performed between signals that survived the channel loss. Therefore, a particular signal only needs to survive travelling through one path~\cite{koji}.

We also note that since we evaluate the secret key rate in the asymptotic regime where $m\rightarrow\infty$, the parties perform steps 1-5 of the protocol only once. Also, for simplicity, we assume that $p_{\rm{Z}}\gg p_{\rm{X}}$, so that we take $p_{\rm{Z}}^2\approx 1$ in Eq.~\eqref{skr} for the simulations below. Moreover, we assume for simplicity that $f=1$.

The full-mode analysis of the AMDI-QKD protocol described above can be found in Appendix \ref{appendix:FMA}, based on Eq.~\eqref{skr} and the device models described in the next section.

\section{Device models}\label{Sec:ToolBox}

Here, we describe the mathematical models that we use to characterise the behavior of the different devices employed to evaluate the performance of the AMDI-QKD protocol.

\subsection{Photonic sources}
  
We shall assume that all entanglement sources emit states of the following form:

\begin{equation}\label{source}
\ket{\psi}= \sum_{n=0}^{\infty}\,\sqrt{p_n}\, \ket {\phi_n},
\end{equation}
where $\sum_{n=0}^{\infty}p_n=1$ and $p_n\geq 0$. The $n$-photon pair states $\ket {\phi_n}$ are given by
\begin{equation}\label{statesEmittedByTheSources}
\sket {\phi_n} =\, \frac{1}{n!\,\sqrt{n+1}}(x_H^{\dagger}y_H^{\dagger}+x_V^{\dagger}y_V^{\dagger})^n\sket{0},
\end{equation}
where $x_H^{\dagger}$ and $y_H^{\dagger}$ ($x_V^{\dagger}$ and $y_V^{\dagger}$) are the creation operators of horizontally (vertically) polarized photons in the modes $x$ and $y$, respectively and $\sket{0}$ denotes the vacuum state.

We note that if we choose $p_1=1$ (\ie, $p_n=0$ for any $n\neq 1$) then Eq.~\eqref{source} represents a perfect entanglement source that is capable of emitting maximally entangled states $\ket {\phi_1}$ with certainty. Another interesting special case is when

\begin{equation}\label{PDCgeneral}
p_n=\frac{(n+1)\lambda^n}{(1+\lambda)^{n+2}},
\end{equation}
holds, in which case Eq.~\eqref{source} describes a type-II PDC source~\cite{PDC} with $\lambda$ being a positive parameter, related to the amplitude of the pumping laser.

\subsection{Detectors}

We shall assume that all the detectors are PNR detectors, characterized by the following positive operator-valued measure (POVM) elements 
\begin{equation}\label{PNRPOVM}
\Pi_{k}= \sum_{n=k}^{\infty} \binom{n}{k}\eta_{\rm{det}}^k (1-\eta_{\rm{det}})^{n-k} \proj{n},
\end{equation}
with $k=0,1,\cdots,\infty$ denoting the number of detected photons. In Eq.~\eqref{PNRPOVM}, the parameter $\eta_{\rm{det}}$ is the detection efficiency of the detectors and $\ket{n}$ is the $n$-photon Fock state. We note that for simplicity in Eq.~\eqref{PNRPOVM} we have disregarded the dark count probability of the PNR detectors.
\subsection{QND measurement}
A linear-optics teleportation-based implementation of the QND measurement can be seen in Fig.~\ref{QndLayout}. It consists of a standard linear-optics BSM module together with an entanglement source $S^{\rm{QND}}$ that emits the state described in Eq.~\eqref{source}.
The purpose of the QND measurement module is to herald if a photon successfully arrived at Charlie's node, so that, in this case, he can continue the protocol with performing his BSMs. The successful heralding events are constituted by the same detection patterns as in the original MDI-QKD protocol~\cite{mdi}, that is, a successful heralding event occurs if the detectors detect exactly one photon in horizontal polarization and one photon in vertical polarization. The setup is based on quantum teleportation. Indeed, in the case of a single-photon input and $S^{\rm{QND}}$ being a perfect EPR source with $p_1=1$, the QND module teleports its input mode to its output mode~\cite{teleportation}. We note that there exist more efficient implementations of a BSM~\cite{GriceBSM, ewertBSM} in terms of the success probability (which can approach $1$ instead of $1/2$ as in the scheme shown in Fig.~\ref{QndLayout}), but they come with the overhead of the need for complicated ancilla states.
\begin{figure}[H]
\centering
  {\includegraphics[scale=0.4]{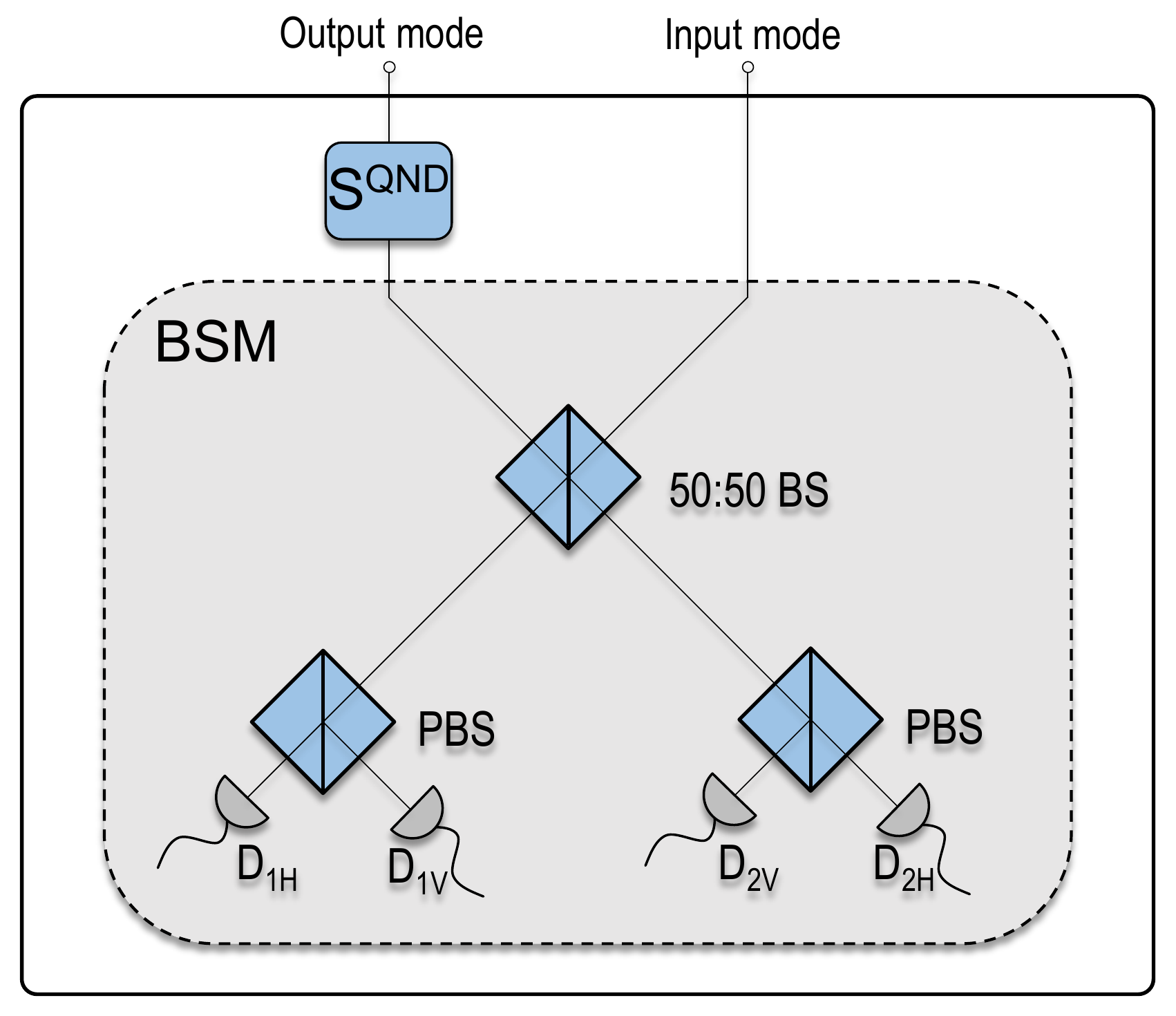}}
  \caption{Linear-optics teleportation-based implementation of a QND measurement. The symbol $S^{\rm{QND}}$ denotes an entanglement source. The optical devices involved are 50:50 beam splitters (BS), polarizing beam splitters (PBS) and PNR detectors ($D_{\rm{1H}}$, $D_{\rm{1V}}$, $D_{\rm{2H}}$, $D_{\rm{2V}}$) that project into horizontal ($H$) or vertical ($V$) polarization. A successful detection event, heralding the arrival of a photon, corresponds to observing exactly one photon in $H$ polarization and one photon in $V$ polarization (\ie, altogether two photons). If $D_{\rm{1H}}$ and $D_{\rm{2V}}$, or $D_{\rm{1V}}$ and $D_{\rm{2H}}$, detect one photon each, this corresponds to a projection into the singlet state $\ket{\psi^-}=1/\sqrt 2(\ket{HV}-\ket{VH})$, while if $D_{\rm{1H}}$ and $D_{\rm{1V}}$, or $D_{\rm{2H}}$ and $D_{\rm{2V}}$, detect one photon each, that corresponds to a projection into the triplet state $\ket{\psi^+}=1/\sqrt 2(\ket{HV}+\ket{VH})$. In the case of a successful detection event, the state of the input photon is teleported to the output mode (up to a unitary transformation).}
\label{QndLayout}
\end{figure} 

\subsection{Optical switches}

Charlie directs the successfully arriving signals into his BSMs with the help of optical switches. For this, an active feedforward mechanism is required since it takes time to align the ports of the switches properly, so that successful signals  from Alice and Bob end up in the same BSM at Charlie's site. It is assumed that one active feedforward takes time $\tau$, during which signals propagate in optical fibers. Therefore, the feedforward procedure can be modelled as a lossy channel with transmittance $\eta_{\rm{f}}=\exp(-\tau c /L_{\rm{att}})$ due to the propagation of the signals through the fibers, where $c$ denotes the speed of light in the optical fiber. Otherwise, we assume perfect switching devices with unlimited number of entries.

\subsection{BSM modules after the switches}

In this case, the linear-optics implementation of the middle BSM modules is depicted in Fig.~\ref{BsmLayout}. Similarly to the QND module, the success events are constituted by the same detection patterns as in the original MDI-QKD protocol~\cite{mdi}. Note, however, that this middle BSM module differs from the one used in the QND measurement. Here, Hadamard gates (denoted in Fig.~\ref{BsmLayout} by H) are applied~\cite{HadamardOperation}, whose operation can be described as
\begin{equation}\label{HadamardOperation}
\begin{bmatrix}
    y_H^{\dagger} \\[0.1cm]
    y_V^{\dagger} 
\end{bmatrix}=\frac {1} {\sqrt{2}}
\begin{bmatrix}
    1 & 1  \\[0.15cm]
    1 & -1 
\end{bmatrix}\\
\begin{bmatrix}
    x_H^{\dagger} \\[0.1cm]
    x_V^{\dagger} 
\end{bmatrix},
\end{equation}
where $y_H^{\dagger}$ ($y_V^{\dagger}$) and $x_H^{\dagger}$ ($x_V^{\dagger}$) are the creation operators of the input and output modes of the Hadamard gate in horizontal (vertical) polarization.

\begin{figure}[H]
\centering
  {\includegraphics[scale=0.43]{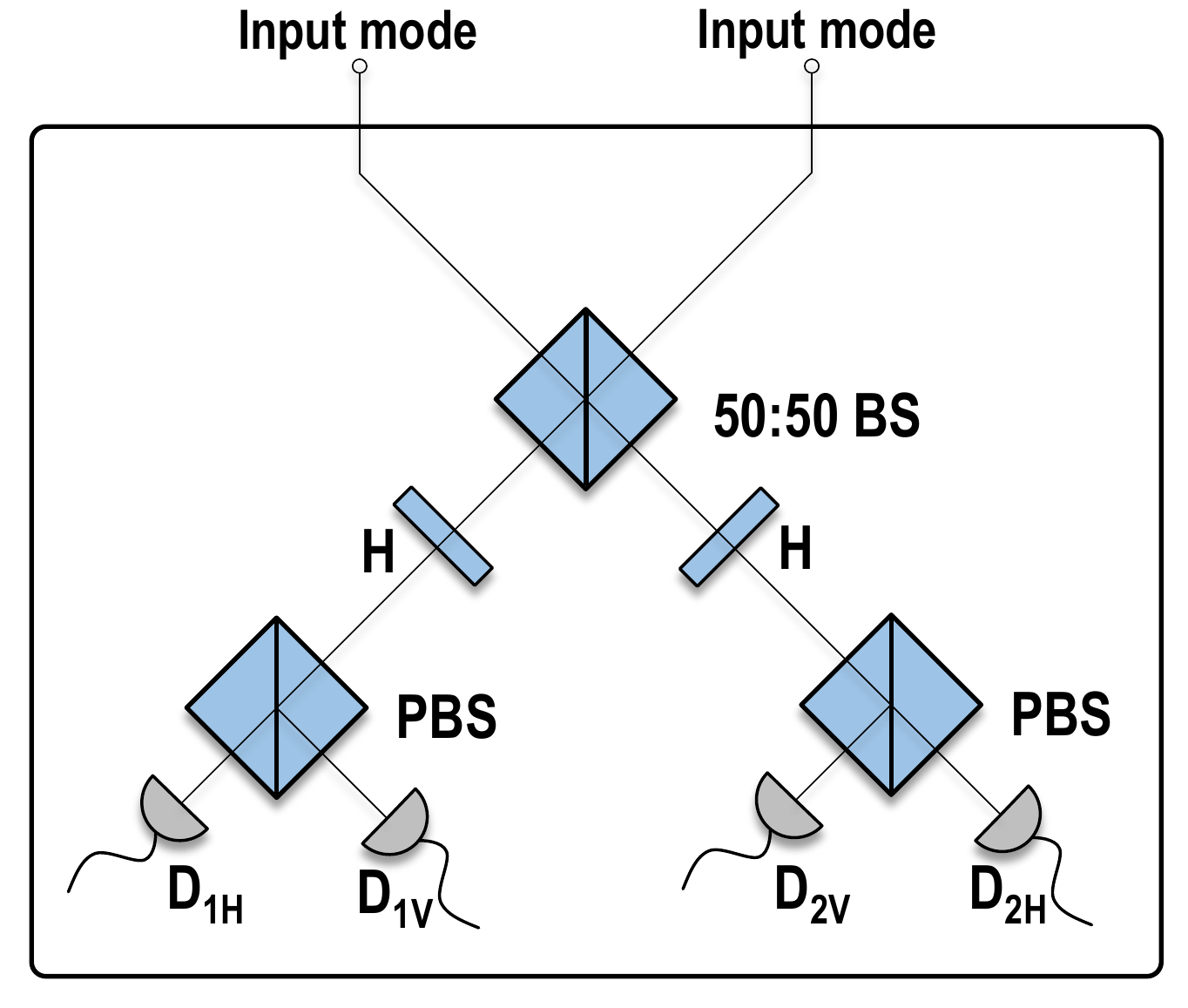}}
  \caption{Linear-optics implementation of the middle BSMs after the switches. The optical devices involved are 50:50 beam splitters (BS), Hadamard gates (H), polarizing beam splitters (PBS) and PNR detectors ($D_{\rm{1H}}$, $D_{\rm{1V}}$, $D_{\rm{2H}}$, $D_{\rm{2V}}$) that project into horizontal ($H$) or vertical ($V$) polarization. A successful detection event corresponds to observing exactly one photon in $H$ polarization and one photon in $V$ polarization (\ie, altogether two photons). If $D_{\rm{1H}}$ and $D_{\rm{2V}}$, or $D_{\rm{1V}}$ and $D_{\rm{2H}}$, detect one photon each, this corresponds to a projection into the singlet state $\ket{\psi^-}=1/\sqrt 2(\ket{HV}-\ket{VH})$, while if $D_{\rm{1H}}$ and $D_{\rm{1V}}$, or $D_{\rm{2H}}$ and $D_{\rm{2V}}$, detect one photon each, that corresponds to a projection into the state $\ket{\phi^-}=1/\sqrt 2(\ket{HH}-\ket{VV})$.}
\label{BsmLayout}
\end{figure}
Including the Hadamard gates is advantageous, since they can prevent errors that can occur otherwise. For example, when the source $S^{\rm{QND}}$ emits a two-photon pair state and the input of the QND module is the vacuum state, a successful heralding event can still occur, despite of the fact that the signal coming from the corresponding party was lost. In this scenario, it can be shown (see Appendix \ref{appendix:ExplanationForHad}) that the output state of the QND measurement, heading towards the middle BSM, is a two photon signal consisting of one vertically and one horizontally polarized photon. Now, without the Hadamard gates, this state could give a successful detection event at the middle BSM, which is undesired since there was no signal received from the party connected to the seemingly successful QND measurement. Intuitively, one expects that adding the Hadamard gates, \ie, rotating the signal, will just decrease the chance of the two photons ending up in detectors corresponding to different polarizations. This is so because, after the rotation, it is not predetermined any more which detector they hit after the polarizing beam splitter, and it rather becomes a probabilistic process. However, carrying out the calculation shows (see Appendix \ref{appendix:ExplanationForHad}) that, in fact, the Hadamard gates force the spurious two photon signal to give non-conclusive detection events at the final BSM, which is the best possible scenario.

\subsection{Quantum channel}
For concreteness, we assume that the quantum channel connecting Alice (Bob) to Charlie is an optical fiber with attenuation length $L_{\rm{att}}$ and transmittance \begin{equation}
\eta_{\rm{ch}}=\exp\left[-\frac{L}{2L_{\rm{att}}}\right],
\end{equation} where $L$ is the distance between Alice and Bob, and $L_{\rm{att}}=10\,\textrm{dB}/(\alpha\log 10)$, where $\log$ denotes the natural logarithm and $\alpha$ is the loss coefficient of the channel measured in dB/km. For simplicity, we do not consider any misalignment effect in our system.

\subsection*{}
Using the above devices, we can summarize the differences between the AMDI-QKD scheme considered in this paper and the original proposal. In particular, here we have replaced both the perfect entanglement source for the QND measurement and the perfect single-photon sources of Alice and Bob in the original AMDI-QKD~\cite{koji} by more realistic entanglement sources, described by Eq.~\eqref{source}, which have non-zero probability of emitting multiple-photon signals in general. Note that the single-photon sources possessed by Alice and Bob are achieved in our scheme by measuring one mode of their entanglement sources. Another modification is to change all the threshold detectors, which were in the BSM and the QND measurement, to PNR detectors. Apart from these, the protocol, of course, runs in a similar manner as in the original proposal~\cite{koji}, described in Sec.~\ref{Sec:steps}.

\section{Necessary conditions}\label{Sec:NecessaryConditions}

In this section, we first derive a simple but non-trivial analytical necessary condition on the photon-number statistics of the sources for the AMDI-QKD protocol to be able to beat the repeaterless bound reported in~\cite{PLOB}, which is given by the formula $-\log_2(1-\eta_{\rm{ch}}^2)$. For this, we use the fact that, if one cannot beat the bound with $\eta_{\rm{det}}=1$ and $\tau=0\,\rm{s}$, then it is also impossible to beat it with 
$\eta_{\rm{det}}<1$ and $\tau>0\,\rm{s}$, since the secret key rate cannot increase with lower efficiency detectors. Thus, we can construct the necessary condition by requiring that the bound in~\cite{PLOB} is overcome when we set $\eta_{\rm{det}}=1$ and $\tau=0\,\rm{s}$, which significantly simplifies the secret key rate formula, making the derivation of an analytical result possible. With this simple necessary condition, it is already possible to show that PDC sources cannot beat the repeaterless bound with the AMDI-QKD protocol.

Afterwards, we consider the general case where $\eta_{\rm{det}}<1$ and $\tau>0\,\rm{s}$. In this scenario we can only obtain the condition on the sources to beat the repeaterless bound by numerically evaluating the secret key rate formula, given by Eq.~\eqref{skr}. The required formulas to evaluate Eq.~\eqref{skr} can be found in Appendix~\ref{appendix:FMA}. 

By comparing the two results, we see that the finite detection efficiency of the detectors has a significant impact on the required photon-number statistics of the sources. To be precise, when the detection efficiency of the detectors decreases, then the requirements on the maximum values of the multi-photon probabilities of the sources become more severe. 
 
\subsection{Simple analytical necessary condition}

When we set $\eta_{\rm{det}}=1$ and $\tau=0\,\rm{s}$, we find that the secret key rate formula can be written as (for the details of the calculation see Appendix \ref{appendix:FMA}):
\begin{equation}\label{skrFormulaUnitEff}
R_{[\eta_{\rm{det}}=1,\,\tau=0]}=\frac{3\,p_1\,q_1^2\,\eta_{\rm{ch}}^2}{8\,q_2+4\,(3\,q_1-2\,q_2)\,\eta_{\rm{ch}}},
\end{equation}
where the $p_n$ ($q_m$) values represent the photon-number statistics of the sources $S_{\rm{AC}}$ and $S_{\rm{BC}}$ ($S^{\rm{QND}}$). We remark that we can choose the values of $p_n$ to be equal for the sources $S_{\rm{AC}}$ and $S_{\rm{BC}}$ because Charlie is located halfway between them, and such selection is optimal in this scenario. Importantly, we note that in Eq.~\eqref{skrFormulaUnitEff} only the probabilities $p_1$, $q_1$ and $q_2$ appear. This is so because of the following. The appearance of $p_1$ and $q_1$ is the consequence of the fact that only single-photon pairs can give rise to true success events. Using unit efficiency detectors, the only possible way for multiple-photon pairs to cause a seemingly successful QND measurement is when the $S^{\rm{QND}}$ source emits the state $\ket {\phi_2}$ and the signal from Alice (Bob) is lost during the transmission through the optical fiber, which explains the appearance of $q_2$.  

From Eq.~\eqref{skrFormulaUnitEff} we learn, that in the case of the use of detectors with unit efficiency, in the asymptotic limit of $L\rightarrow\infty$ (\ie, $\eta_{\rm{ch}}\rightarrow 0$), $R_{[\eta_{\rm{det}}=1,\,\tau=0]}\simeq 3\,p_1\,q_1^2/(8\,q_2)\,\eta_{\rm{ch}}^2$, which means that choosing the parameters $p_1$, $q_1$ and $q_2$ properly, we can actually beat the repeaterless bound~\cite{PLOB} in this limit. The reason why the secret key rate never breaks down in the unit efficiency detector case is due to the fact that we have neglected dark counts and misalignment in the quantum channel.

\begin{claim*}
To beat the repeaterless bound~\cite{PLOB} with the AMDI-QKD protocol, using PNR detectors and entanglement sources $S_{\rm{AC}}$ and $S_{\rm{BC}}$ ($S^{\rm{QND}}$) of the form given by Eq.~\eqref{source}, with photon-number statistics $p_n$ ($q_m$), it is necessary that \begin{equation}\label{FinalConditionWithValidityCheck}
q_2\leq \min \left(\frac{25}{96}\,p_1\,q_1^2\,, 1-q_1 \right), 
\end{equation} is fulfilled.
\end{claim*}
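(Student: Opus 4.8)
The statement to prove is the necessary condition (\ref{FinalConditionWithValidityCheck}): that beating the repeaterless bound $-\log_2(1-\eta_{\rm ch}^2)$ forces $q_2\le\min(\tfrac{25}{96}p_1q_1^2,\,1-q_1)$. The strategy, as already signposted in the text preceding the claim, is to argue by the monotonicity remark: if one cannot beat the bound in the idealized case $\eta_{\rm det}=1$, $\tau=0$, then one cannot beat it at all; so it suffices to impose $R_{[\eta_{\rm det}=1,\tau=0]}>-\log_2(1-\eta_{\rm ch}^2)$ using the closed form (\ref{skrFormulaUnitEff}) and squeeze out a necessary condition that survives the $L\to\infty$ limit. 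First I would take $\eta_{\rm ch}\to 0$ in both sides: the right-hand side satisfies $-\log_2(1-\eta_{\rm ch}^2)=\eta_{\rm ch}^2/\ln 2 + O(\eta_{\rm ch}^4)$, while from (\ref{skrFormulaUnitEff}) the left-hand side behaves as $\tfrac{3p_1q_1^2}{8q_2}\eta_{\rm ch}^2 + O(\eta_{\rm ch}^3)$. Hence beating the bound for all large $L$ requires the leading coefficients to satisfy $\tfrac{3p_1q_1^2}{8q_2}\ge \tfrac{1}{\ln 2}$, i.e. $q_2\le \tfrac{3\ln 2}{8}\,p_1 q_1^2$. Since $\tfrac{3\ln 2}{8}\approx 0.2599\le \tfrac{25}{96}\approx 0.2604$, this already yields the first term in the minimum (and the slightly looser bound $\tfrac{25}{96}$ is a clean rational relaxation). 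The second term, $q_2\le 1-q_1$, is simply the normalization constraint $\sum_m q_m=1$ together with $q_m\ge0$ for the $S^{\rm QND}$ source, which holds unconditionally, so it is not really an extra requirement but a validity reminder included so that (\ref{FinalConditionWithValidityCheck}) is stated as an honest intersection of constraints.

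**Order of the steps.** (i) Invoke the monotonicity argument to reduce to the $\eta_{\rm det}=1$, $\tau=0$ regime and hence to formula (\ref{skrFormulaUnitEff}) — this is imported from Appendix~\ref{appendix:FMA}. (ii) Note that beating the bound at long distance means $R_{[\eta_{\rm det}=1,\tau=0]}(\eta_{\rm ch})>-\log_2(1-\eta_{\rm ch}^2)$ must hold as $\eta_{\rm ch}\to 0$; both sides vanish quadratically, so compare the $\eta_{\rm ch}^2$ coefficients. (iii) Expand: the denominator $8q_2+4(3q_1-2q_2)\eta_{\rm ch}$ tends to $8q_2$, giving the asymptotic rate $\tfrac{3p_1q_1^2}{8q_2}\eta_{\rm ch}^2$; expand the logarithm to get $\tfrac{1}{\ln2}\eta_{\rm ch}^2$. (iv) Conclude $\tfrac{3p_1q_1^2}{8q_2}\ge\tfrac1{\ln2}$, rearrange to $q_2\le\tfrac{3\ln2}{8}p_1q_1^2\le\tfrac{25}{96}p_1q_1^2$. (v) Append the trivial bound $q_2\le 1-q_1$ from normalization and nonnegativity of the $q_m$, and combine into the stated minimum.

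**The main obstacle.** The delicate point is step (ii)–(iii): one must be careful that the inequality "$R>$ bound" is required in the limit rather than at a single finite $\eta_{\rm ch}$, and that taking the limit of the ratio is legitimate — i.e. that a strict inequality between two functions both $\sim c\,\eta_{\rm ch}^2$ forces $c_{\rm LHS}\ge c_{\rm RHS}$ (a non-strict inequality in the limit), which is exactly what one wants since we only claim a necessary condition. A secondary subtlety is the case $q_2=0$: then (\ref{skrFormulaUnitEff}) has the leading term $\tfrac{3p_1q_1^2}{12q_1}\eta_{\rm ch}=\tfrac{p_1q_1}{4}\eta_{\rm ch}$, which decays only linearly and hence for small $\eta_{\rm ch}$ dominates $\eta_{\rm ch}^2$; so $q_2=0$ trivially satisfies (\ref{FinalConditionWithValidityCheck}) and need not be excluded — worth a sentence so the $1/q_2$ manipulations are seen to be safe for $q_2>0$. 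Everything else is elementary: a Taylor expansion of $-\log_2(1-x^2)$, a limit of a rational function, and the observation $3\ln2/8<25/96$. No genuinely hard estimate is needed; the content is entirely in correctly reducing the "beat the bound at all distances" requirement to a comparison of leading-order coefficients.
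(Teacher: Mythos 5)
Your proposal is correct and follows the same overall route as the paper: reduce to the $\eta_{\rm det}=1$, $\tau=0$ formula \eqref{skrFormulaUnitEff} by monotonicity, compare the rate against the quadratic lower bound on $-\log_2(1-\eta_{\rm ch}^2)$, and read off $q_2\lesssim p_1q_1^2$; your constant $3\ln 2/8$ is marginally tighter than, and implies, the paper's $25/96$ (which comes from rounding $1/\ln 2$ down to $1.44$). The one substantive difference is how the comparison of quadratic coefficients is justified. You take the limit $\eta_{\rm ch}\to 0$ and compare leading-order terms, which establishes the necessary condition only for beating the bound in the long-distance regime. The paper instead splits on the sign of $3q_1-2q_2$: if $3q_1-2q_2\le 0$ it bounds the rate by $p_1q_1\eta_{\rm ch}^2/4\le\eta_{\rm ch}^2/4$ uniformly, and otherwise the denominator of \eqref{skrFormulaUnitEff} is minimized at $\eta_{\rm ch}=0$, so $R_{[\eta_{\rm det}=1,\tau=0]}\le \tfrac{3p_1q_1^2}{8q_2}\eta_{\rm ch}^2$ holds for \emph{every} $\eta_{\rm ch}\in(0,1]$, not just asymptotically. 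Combined with $-\log_2(1-\eta_{\rm ch}^2)\ge 1.44\,\eta_{\rm ch}^2$ for all $\eta_{\rm ch}$, this makes \eqref{FinalConditionWithValidityCheck} necessary for beating the bound at \emph{any} distance, whereas your limit argument leaves open (harmless in practice, but strictly a gap if ``beat'' means ``at some distance'') the possibility of crossing the bound at finite $\eta_{\rm ch}$ while failing asymptotically. Your observations about the non-strictness of the limiting inequality, the safety of dividing by $q_2$ when $q_2>0$, and the triviality of the $q_2\le 1-q_1$ term all match the paper's treatment.
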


\begin{proof}
Making use of the Taylor expansion, the following inequality trivially holds for the repeaterless bound, reported in~\cite{PLOB}:
\begin{equation}\label{PLOBTaylor}
-\log_2(1-\eta_{\rm{ch}}^2)=\sum_{n=1}^{\infty}\frac{\eta_{\rm{ch}}^{2n}}{n\ln2}\geq 1.44 \eta_{\rm{ch}}^2.
\end{equation}
Then, for the necessary condition, we require that $R_{[\eta_{\rm{det}}=1,\,\tau=0]}$, given by Eq.~\eqref{skrFormulaUnitEff}, is greater than $1.44\,\eta_{\rm{ch}}^2$.

If $(3q_1-2q_2)\leq0$ holds for the sources, an upper bound can be given on $R_{[\eta_{\rm{det}}=1,\,\tau=0]}$ by plugging $\eta_{\rm{ch}}=1$ in the denominator of Eq.~\eqref{skrFormulaUnitEff}:
\begin{align}\label{skrneg}
&R_{[\eta_{\rm{det}}=1,\,\tau=0]}=\frac{3\,p_1\,q_1^2\,\eta_{\rm{ch}}^2}{8\,q_2+4\,(3\,q_1-2\,q_2)\,\eta_{\rm{ch}}}\leq \frac{3\,p_1\,q_1^2\,\eta_{\rm{ch}}^2}{12\,q_1}\nonumber\\
&=\frac{p_1\,q_1\eta_{\rm{ch}}^2}{4}\leq\frac {\eta_{\rm{ch}}^2}{4}\leq 1.44\eta_{\rm{ch}}^2.
\end{align}
Therefore, to be able to overcome the repeaterless bound~\cite{PLOB}, $(3q_1-2q_2)>0$ must hold, which means that we can upper bound the secret key rate by plugging $\eta_{\rm{ch}}=0$ in the denominator of Eq.~\eqref{skrFormulaUnitEff}:

\begin{equation}\label{skrpos}
R_{[\eta_{\rm{det}}=1,\,\tau=0]}=\frac{3\,p_1\,q_1^2\,\eta_{\rm{ch}}^2}{8\,q_2+4\,(3\,q_1-2\,q_2)\,\eta_{\rm{ch}}}\leq \frac{3\,p_1\,q_1^2\,\eta_{\rm{ch}}^2}{8\,q_2}.
\end{equation}
For the necessary condition it is then required that

\begin{equation}\label{neccond}
\frac{3\,p_1\,q_1^2\,\eta_{\rm{ch}}^2}{8\,q_2}\geq  1.44\,\eta_{\rm{ch}}^2,
\end{equation}
holds, which can be simplified to
\begin{equation}\label{neccondfinal}
q_2\leq \frac{25\,p_1\,q_1^2}{96} .
\end{equation}
This inequality is tighter than $(3q_1-2q_2)>0$. Moreover, since $\sum_{m=0}^{\infty}q_m=1$, we have that 
\begin{equation}\label{validityCondition}
q_2\leq1-q_1.
\end{equation}
Considering this validity condition, we obtain the simple necessary condition given by Eq.~\eqref{FinalConditionWithValidityCheck}.
\end{proof}

\begin{figure}[H]
\centering
  {\includegraphics[scale=0.5]{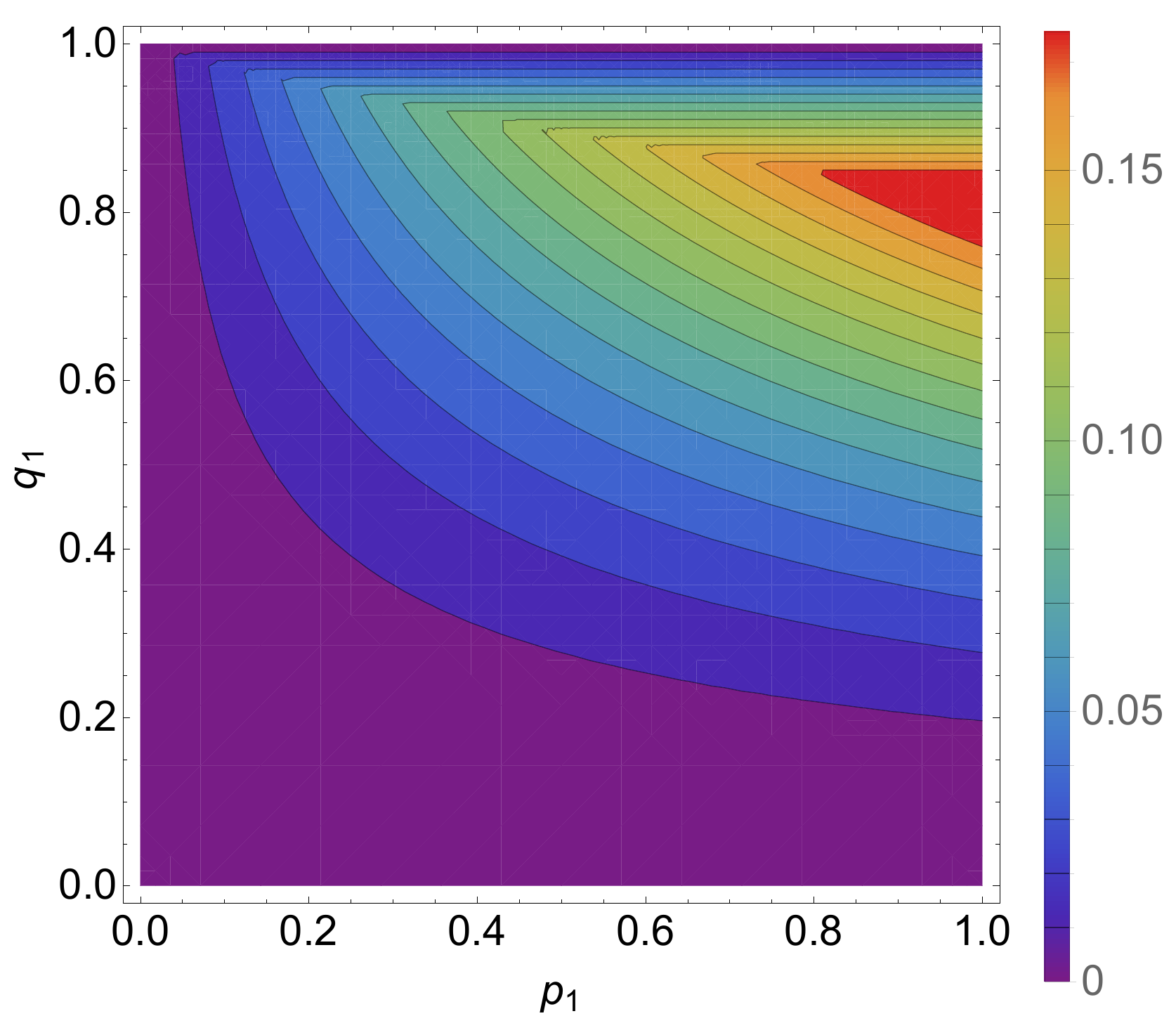}}
  \caption{The maximal allowable value of $q_2$ as a function of $p_1$ and $q_1$, which is necessary for the sources to be able to beat the repeaterless bound~\cite{PLOB} with the AMDI-QKD protocol.}
  \label{Plot:necCondition}
\end{figure}

The necessary condition, given by Eq.~\eqref{FinalConditionWithValidityCheck}, is depicted in Fig.~\ref{Plot:necCondition}, where we plot the value of $q_2$, above which it is not possible to beat the bound~\cite{PLOB}, given the values of $p_1$ and $q_1$. We can see that the necessary condition is more sensitive to the value of $q_1$, than to the value of $p_1$ in the sense that if $q_1$ is too small then no matter how large we set $p_1$, we need to have $q_2=0$ (purple area in Fig.~\ref{Plot:necCondition}). However, in the converse situation, having $p_1$ small does not imply that $q_2=0$. One can also see this formally by noting that in Eq.~\eqref{FinalConditionWithValidityCheck} only $q_1$ appears squared. This is so, because, as explained before, even in the ideal efficiency detector case the source $S^{\rm{QND}}$ can cause errors by introducing two-photon pair signals. While the sources $S_{\rm{AC}}$ and $S_{\rm{BC}}$ cannot, since the detectors in the $Z$/$X$ measurement filter out all the two-photon pair signals.

\begin{cor*}
Using PDC sources, it is impossible to beat the repeaterless bound~\cite{PLOB} with the AMDI-QKD protocol. 
\end{cor*}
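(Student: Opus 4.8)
The plan is to invoke the necessary condition~\eqref{FinalConditionWithValidityCheck} proved in the Claim and to show that no PDC parametrization can meet it. First I would write down the relevant probabilities for PDC sources~\eqref{PDCgeneral}, allowing the parameter $\lambda$ of $S_{\rm{AC}}$, $S_{\rm{BC}}$ to differ from the parameter $\mu$ of $S^{\rm{QND}}$: from Eq.~\eqref{PDCgeneral} one reads off $p_1 = 2\lambda/(1+\lambda)^3$, $q_1 = 2\mu/(1+\mu)^3$ and $q_2 = 3\mu^2/(1+\mu)^4$. Note that the validity part $q_2 \le 1-q_1$ of~\eqref{FinalConditionWithValidityCheck} is automatically satisfied, since the $q_m$ form a genuine probability distribution; also the degenerate case $\mu=0$ gives $q_1=0$ and hence $R_{[\eta_{\rm{det}}=1,\,\tau=0]}=0$ by Eq.~\eqref{skrFormulaUnitEff}, so the bound is trivially not beaten and we may assume $\mu>0$.

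Next I would impose the non-trivial part of~\eqref{FinalConditionWithValidityCheck}, namely $q_2 \le \tfrac{25}{96}\,p_1\,q_1^2$. Substituting the expressions above, cancelling the common factor $\mu^2>0$ and clearing denominators, this is equivalent to
\begin{equation*}
\frac{(1+\lambda)^3}{\lambda}\,(1+\mu)^2 \le \frac{25}{36}.
\end{equation*}
The last step is then an elementary estimate of the left-hand side: since $\mu>0$ we have $(1+\mu)^2>1$, and a one-variable minimization (a derivative test) gives $(1+\lambda)^3/\lambda \ge 27/4$, the minimum being attained at $\lambda=1/2$. Hence the left-hand side exceeds $27/4 = 6.75$, whereas the right-hand side is $25/36 \approx 0.69$, a contradiction. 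Therefore no pair $(\lambda,\mu)$ satisfies~\eqref{FinalConditionWithValidityCheck}, and by the Claim the AMDI-QKD protocol with PDC sources cannot beat the repeaterless bound~\cite{PLOB}.

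I do not expect any genuine obstacle in carrying this out. The only mildly technical point is the minimization $\min_{\lambda>0}(1+\lambda)^3/\lambda = 27/4$, which is routine, together with the observation that the secret key rate formula~\eqref{skrFormulaUnitEff} already builds in the optimal (equal) photon-number statistics for $S_{\rm{AC}}$ and $S_{\rm{BC}}$, so no further optimization over those sources is needed. One should just keep track of the direction of the inequalities; the fact that the two sides differ by roughly an order of magnitude ($6.75$ versus $0.69$) both confirms that the necessary condition can never be met and shows that the conclusion is robust.
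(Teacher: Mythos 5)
Your proposal is correct and follows essentially the same route as the paper: substitute the PDC statistics into the necessary condition \eqref{FinalConditionWithValidityCheck}, reduce to a single inequality in $\lambda$ and $\mu$ (yours is just the reciprocal of the paper's Eq.~\eqref{forcontradiction}), and refute it via the elementary extremum $\max_{\lambda>0}\lambda(1+\lambda)^{-3}=4/27$, i.e.\ $\min_{\lambda>0}(1+\lambda)^3/\lambda=27/4$. Your explicit treatment of the degenerate case $\mu=0$ before cancelling $\mu^2$ is a small point of added rigor that the paper glosses over, but it does not change the argument.
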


\begin{proof}
The photon-number statistics of the PDC sources can be written as
\begin{equation}\label{PDClambda}
p^{\rm{PDC}}_n=\frac{(n+1)\lambda^n}{(1+\lambda)^{n+2}}\,\,\rm{and}\,\,\,\textit{q}^{\rm{PDC}}_{\it{m}}=\frac{(\it{m}+\rm{1})\mu^{\it{m}}}{(1+\mu)^{\it{m}+\rm{2}}},
\end{equation}
where, as already mentioned previously, $\lambda$ ($\mu$) is a positive parameter, related to the amplitude of the laser used to pump the sources $S_{\rm{AC}}$ and $S_{\rm{BC}}$ ($S^{\rm{QND}}$). We note that due to the symmetries of the setup, that is, Charlie is located halfway between Alice and Bob, we can set the $\lambda$ values for the sources $S_{\rm{AC}}$ and $S_{\rm{BC}}$ equal.
 
Plugging Eq.~\eqref{PDClambda} into Eq.~\eqref{FinalConditionWithValidityCheck}, we have that
\begin{equation}\label{forcontradiction}
\frac{\lambda}{(1+\lambda)^3\,(1+\mu)^2}\geq \frac{36}{25}
\end{equation}
is necessary to overcome the repeaterless bound~\cite{PLOB}. However, since $(1+\mu)^{-2}<1$ and $\max\{\lambda\,(1+\lambda)^{-3}\}= 4/27$ (the latter can be seen easily by taking the derivative with respect to $\lambda$), we have that   
\begin{equation}\label{contradiction}
\frac{\lambda}{(1+\lambda)^3\,(1+\mu)^2}\leq \frac{4}{27},
\end{equation}
which obviously contradicts Eq.~\eqref{forcontradiction}, meaning that for PDC sources the necessary condition cannot be fulfilled. Therefore, it is impossible to overcome the repeaterless bound~\cite{PLOB}.
\end{proof}

\subsection{Tighter necessary condition}

In this section, we now analyse the necessary condition without the assumption of $\eta_{\rm{det}}=1$ and $\tau=0\,\rm{s}$. For this, we numerically evaluate the general secret key rate formula derived in Appendix~\ref{appendix:FMA}.

For simplicity, however, in the simulations, we restrict ourselves to the case where every source emits at most two photon pairs, that is, we set $q_n=p_n=0$ for any $n\geq 3$. We remark, however, that, with the general secret key rate formula given in Appendix~\ref{appendix:FMA}, it is possible to allow for an arbitrary number of emitted photon pairs if one has sufficient computational power. The results can be seen in Fig.~\ref{Plot:Comparison}. We introduced the quantities $P\equiv p_2/p_1$ and $Q\equiv q_2/q_1$ to characterise the quality of the sources, lower values meaning higher quality sources since emitting two photon pairs is less likely than emitting the desired EPR pair, for given values of $p_0$ and $q_0$.

Let $Q^{\rm{max}}_{[\eta_{\rm{det}},\tau]}$ denote the maximally allowable value of $Q$ to be able to overcome the bound~\cite{PLOB}, as a function of $p_0$ and $P$, given the parameters $\eta_{\rm{det}}$ and $\tau$. In Fig.~\ref{Plot:Comparison}, we plot the quantity $Q^{\rm{max}}_{[\eta_{\rm{det}}<1,\, \tau=67\,\rm{ns}]}/Q^{\rm{max}}_{[\eta_{\rm{det}}=1, \,\tau=0\,\rm{ns}]}$ as a function of $p_0$ and $P$ for different values of $\eta_{\rm{det}}<1$, given the value of $q_0$. So that we can observe how does the $Q^{\rm{max}}$ value decreases for a given $p_0$ and $P$ if we have $\eta_{\rm{det}}<1$ and $\tau=67\,\rm{ns}$, compared to the $\eta_{\rm{det}}=1$ and $\tau=0\,\rm{ns}$ case. We set $\tau=67\,\rm{ns}$ as illustration, since this is the value used in the original implementation~\cite{koji}, based on the experiments reported in~\cite{feedforward1,feedforward2}. Our simulations show that the value of $q_0$ does not influence significantly the order of magnitude tendencies observed while decreasing the detection efficiency of the detectors. Loosely speaking, varying the value of $q_0$ only translates the secret key rate curve vertically (\ie, the secret key rate basically decreases by a constant factor for all values of the channel loss) for given $P$ and $p_0$ values. This means that in the comparison, every curve for the different detection efficiencies will be translated by the same factor, therefore, for a different $q_0$ value, the $Q^{\rm{max}}$ values will also be altered by a common factor for each detection efficiency, and since we are plotting their quotients, it means that the plots on Fig.~\ref{Plot:Comparison} will stay very similar.

\begin{figure}[H]
  \centering  
  \subfigure[]{\includegraphics[scale=0.33]{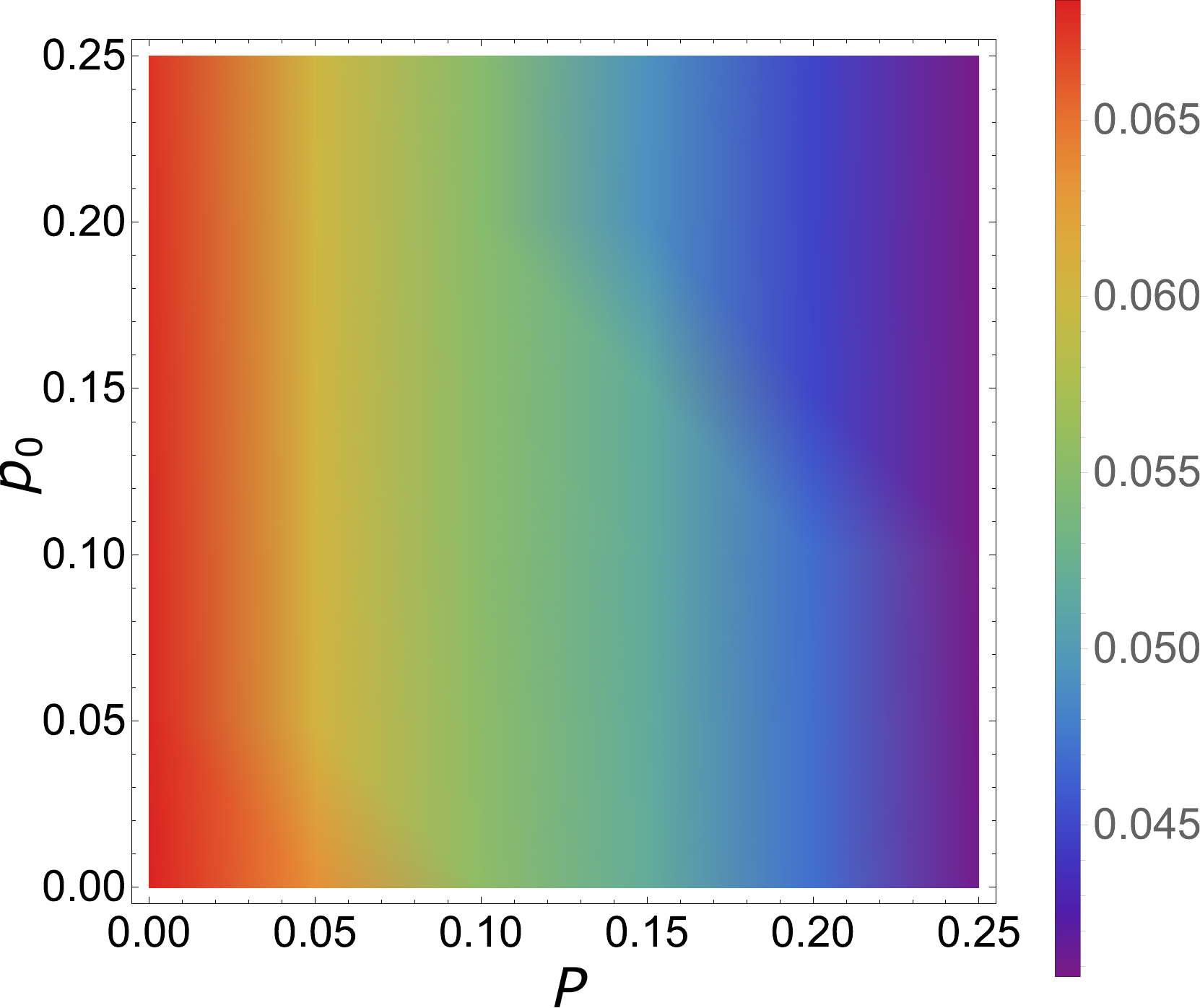}}
  \subfigure[]{\includegraphics[scale=0.33]{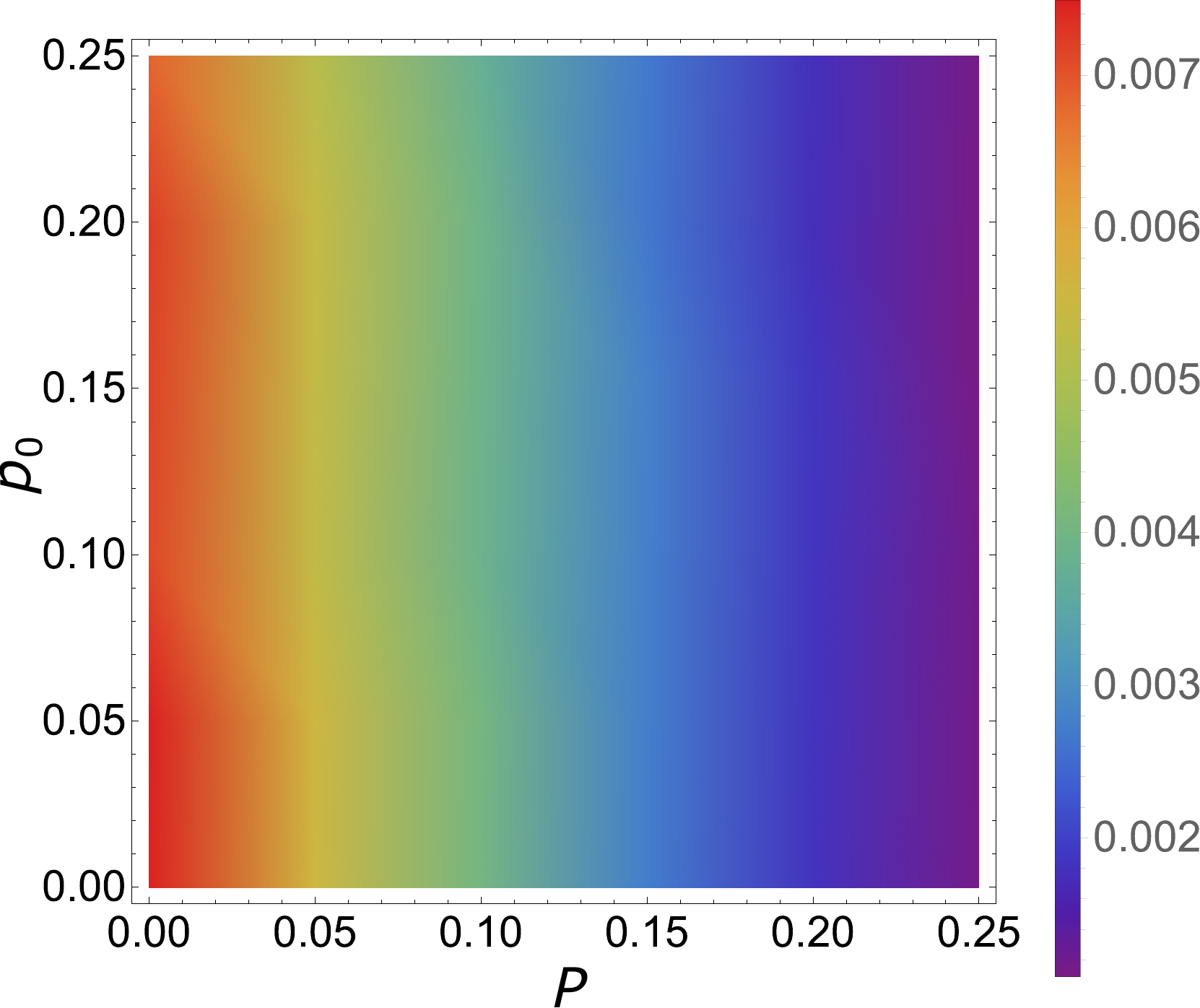}} 
  \subfigure[]{\includegraphics[scale=0.33]{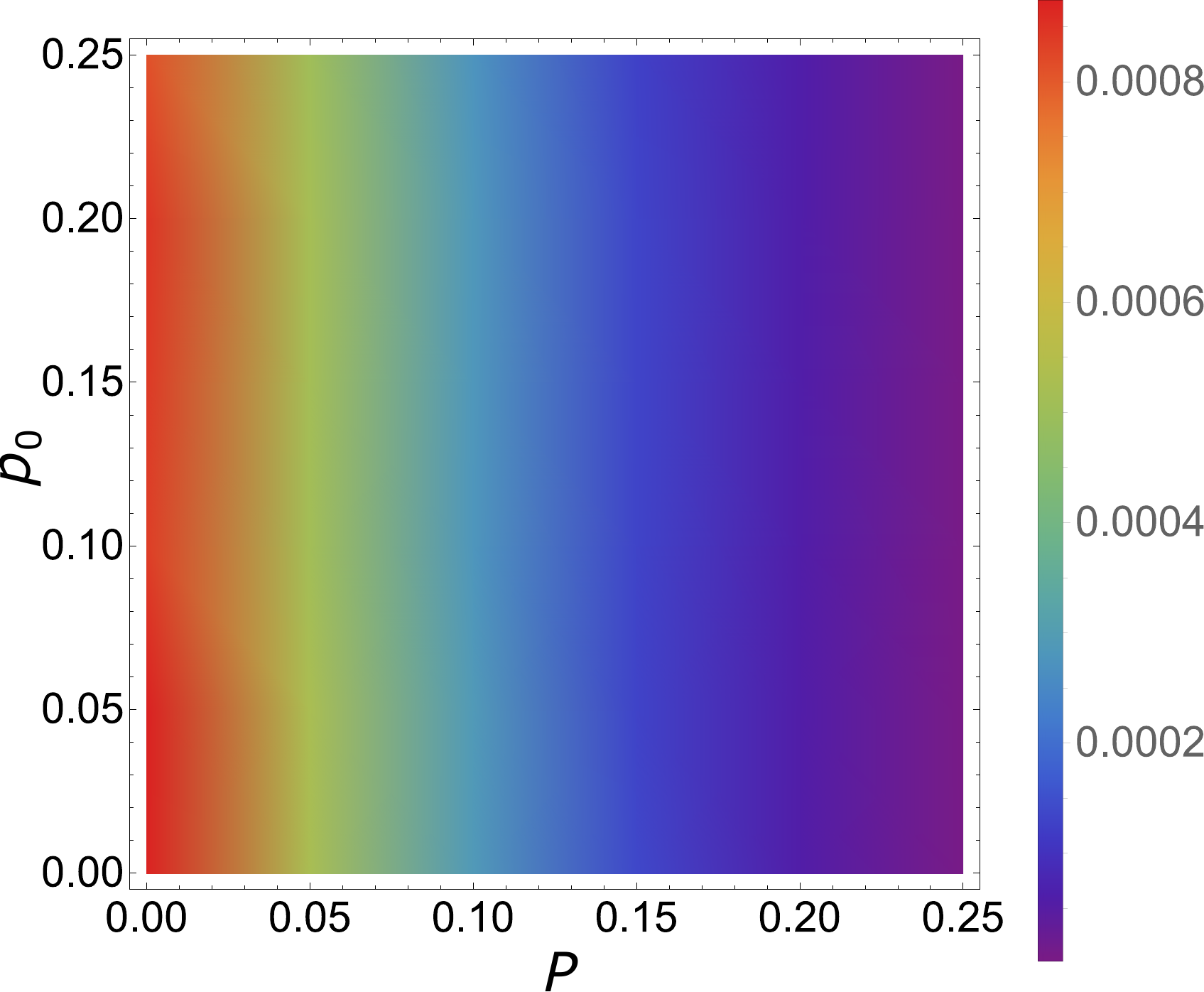}}
  \caption{Comparison of the analytically obtained necessary condition using the unit efficiency assumption and $\tau = 0\,\rm{ns}$ to the numerically obtained necessary condition when $\tau = 67\,\rm{ns}$~\cite{koji,feedforward1,feedforward2} and (a) $\eta_{\rm det}=0.9$, (b) $\eta_{\rm det}=0.7$ and (c) $\eta_{\rm det}=0.5$. In each figure, the quantity plotted is $Q^{\rm{max}}_{[\eta_{\rm{det}},\,67\,\rm{ns}]}/Q^{\rm{max}}_{[1,\,0\,\rm{ns}]}$ as a function of $p_0$ and $P\equiv p_2/p_1$. The quantity $Q^{\rm{max}}_{[\eta_{\rm{det}},\tau]}$ denotes the maximally allowable value of $Q\equiv q_2/q_1$ in order to overcome the bound~\cite{PLOB}. Moreover, for the simulations, we assume that $L_{\rm{att}}=22\,\rm{km}$, $c=2\cdot 10^8\,\rm{m/s}$ and $q_0=0.2$. See the main text for further details.}
  \label{Plot:Comparison}
\end{figure}

First, let us examine the region of the plots where $P\approx 0$ (reddish areas in Fig.~\ref{Plot:Comparison}), which means that the sources of Alice and Bob are close to perfect entanglement sources. We see that decreasing the value of $\eta_{\rm{det}}$ from 1 to 0.9, 0.7 and 0.5 results in the need for about an order of magnitude higher quality sources at Charlie's hand for each efficiency compared to the previous one. This is partly due to fact that eight detections are needed for obtaining a raw key bit (one in the $Z$/$X$ measurement of each Alice and Bob, two in the QND measurement at each side and two more in the BSM), therefore, now we have a factor of $\eta_{\rm{det}}^8<1$ in the probability of each key generation event, but this only translates the secret key rate curve vertically. On top of this, and what is more important, we now have an increased probability of obtaining erroneous key generation events from the two-photon pair component of the sources $S^{\rm{QND}}$. To see this, suppose that one of the sources $S^{\rm{QND}}$ on Alice's side emits a two-photon pair signal and all the other sources emit one-photon pair signals, and suppose that the signal from Alice towards the QND measurement is lost in the transmission. In this case, as already mentioned previously, we can get a seemingly successful detection event if one photon out of the two going from the QND module towards the BSM is lost in the detection process of the BSM, which happens with a probability $1-\eta_{\rm{det}}$. Thus, the error rate will increase as we decrease the detection efficiency. The only way to compensate this error is to have better quality sources at Charlie's hand (meaning lower $Q$ values). 

Now, let us observe the part of the plots, where $P\approx 0.25$ (purplish areas in Fig.~\ref{Plot:Comparison}), meaning that Alice and Bob no longer have perfect entanglement sources. In this case, we find that the more we decrease the value of $\eta_{\rm{det}}$ the more the values of $Q^{\rm{max}}_{[\eta_{\rm{det}}<1,\, \tau=67\,\rm{ns}]}/Q^{\rm{max}}_{[\eta_{\rm{det}}=1, \,\tau=0\,\rm{ns}]}$ in the region of $P\approx 0.25$ will decrease compared to the values in the region of $P\approx 0$. In other words, in the $P\approx 0.25$ region the $Q^{\rm{max}}_{[\eta_{\rm{det}}<1,\, \tau=67\,\rm{ns}]}/Q^{\rm{max}}_{[\eta_{\rm{det}}=1, \,\tau=0\,\rm{ns}]}$ values do not decrease linearly with $\eta_{\rm{det}}$, which is more or less true in the $P\approx 0$ region. What has been said before for the region $P\approx 0$ is true for this region as well, but, on top of those effects, since $P>0$, the sources of Alice and Bob now have a non-zero probability of producing erroneous successful detections in the $Z$/$X$ measurements due to their two-photon pair component, which was not possible before in the $P\approx 0$ region. Consider the previously explained situation with the difference, that Alice's source now emits a two-photon pair signal and both photons from Alice heading towards the QND module are lost in the transmission. In this case we can only get a seemingly successful raw key generation event if one photon out of the two in the $Z$/$X$ measurement at Alice's site is lost in the detection, which occurs with a probability $1-\eta_{\rm{det}}$. Thus, the probability of this type of error scales with $(1-\eta_{\rm{det}})^2$ (the other $1-\eta_{\rm{det}}$ factor comes from the fact that in the scenario considered, we need to lose one more photon in the detection process of the BSM to have a seemingly successful raw key generation event), meaning that it does not depend linearly on the detection efficiency as in the $P\approx 0$ region, which explains the observed behaviour for the $P\approx 0.25$ region. From this, we can see that, as expected, $P$ has a more significant impact on the $Q^{\rm{max}}$ values than $p_0$, which, since increasing $p_0$ will not increase the probabilities of an error, just translates the secret key rate curve vertically. These are the main reasons behind this dramatic increase in the quality of Charlie's sources as we decrease $\eta_{\rm{det}}$.

Summing up the observations from Fig.~\ref{Plot:Comparison} we can say that the necessary quality of the sources can be much higher than what is expected from the unit detection efficiency condition if we have non-perfect detectors. We also note that using the formulas for the secret key rate from Appendix~\ref{appendix:FMA}, Fig.~\ref{Plot:Comparison} can be easily made for arbitrary $0\leq q_0<1$ and $0<\eta_{\rm{det}}\leq 1$.

\section{Conclusion}\label{Sec:Conclusion}

We have investigated the performance of a more realistic implementation of the original AMDI-QKD protocol~\cite{koji}, assuming that the parties have access to a broad class of entanglement sources of the form given by Eq.~\eqref{source} and photon-number resolving (PNR) detectors. We have shown that the improved $\eta_{\rm{ch}}$ scaling (with $\eta_{\rm{ch}}$ being the transmittance of the channel connecting Alice and Bob to Charlie), offered by the protocol, is very sensitive to multiple-photon pair components emitted by the sources.

More precisely, we have derived a simple non-trivial analytical necessary condition on the photon-number statistics of the entanglement sources to be able to overcome the repeaterless bound~\cite{PLOB} with the AMDI-QKD protocol. With this condition, we have demonstrated analytically that employing the widely available parametric down-conversion sources does not enable the protocol to beat the repeaterless bound. Furthermore, we have quantitatively investigated the effect that the finite detection efficiency of the detectors have on the required photon-number statistics of the sources. In this regard, we have shown that, when the detection efficiency of the detectors decreases, then the maximum tolerable values of the multi-photon probabilities of the sources in order to beat the repeaterless bound become significantly more severe. This latter study, however, was only feasible by numerically evaluating the secret key rate formula of the protocol.

Our results suggest, that, while the AMDI-QKD protocol could in principle overcome the repeaterless bound with idealized devices, in practice it demands very high quality entanglement sources, which are thus still challenging to realize with current technology, besides, of course, the involved multiplexing techniques depending on the channel length.

\section{Acknowledgement}
We thank the Spanish Ministry of Economy and Competitiveness (MINECO), the Fondo Europeo de Desarrollo Regional (FEDER) through the grant TEC2017-88243-R, and the European Union's Horizon 2020 research and innovation programme under the Marie Sklodowska-Curie grant agreement No 675662 (project QCALL) for financial support. K.A. thanks support, in part, from
PRESTO, JST JPMJPR1861.

\appendix
\section{BSM with Hadamard gates}
\label{appendix:ExplanationForHad}
In this appendix we give an intuitive argument on the reason why the use of Hadamard gates is advantageous in the BSM after the optical switches. Suppose that the state used for key generation is $\ket{\phi_1}$, given by Eq.~\eqref{statesEmittedByTheSources}. If every source emits this state and a successful detection pattern occurs, the parties share the desired quantum correlation to obtain their secret key (given that dark counts are neglected). However, if some of the sources emit multiple-photon pair states, for example the state $\ket{\phi_2}$, also given by Eq.~\eqref{statesEmittedByTheSources}, this could result in a seemingly successful detection event, which does not provide the parties with the desired correlations and will end up producing errors. As we will show below, including the Hadamard gates in the BSM removes the possibility of getting errors from the state $\ket{\phi_2}$.

\begin{figure}[H]
\centering
  {\includegraphics[scale=0.29]{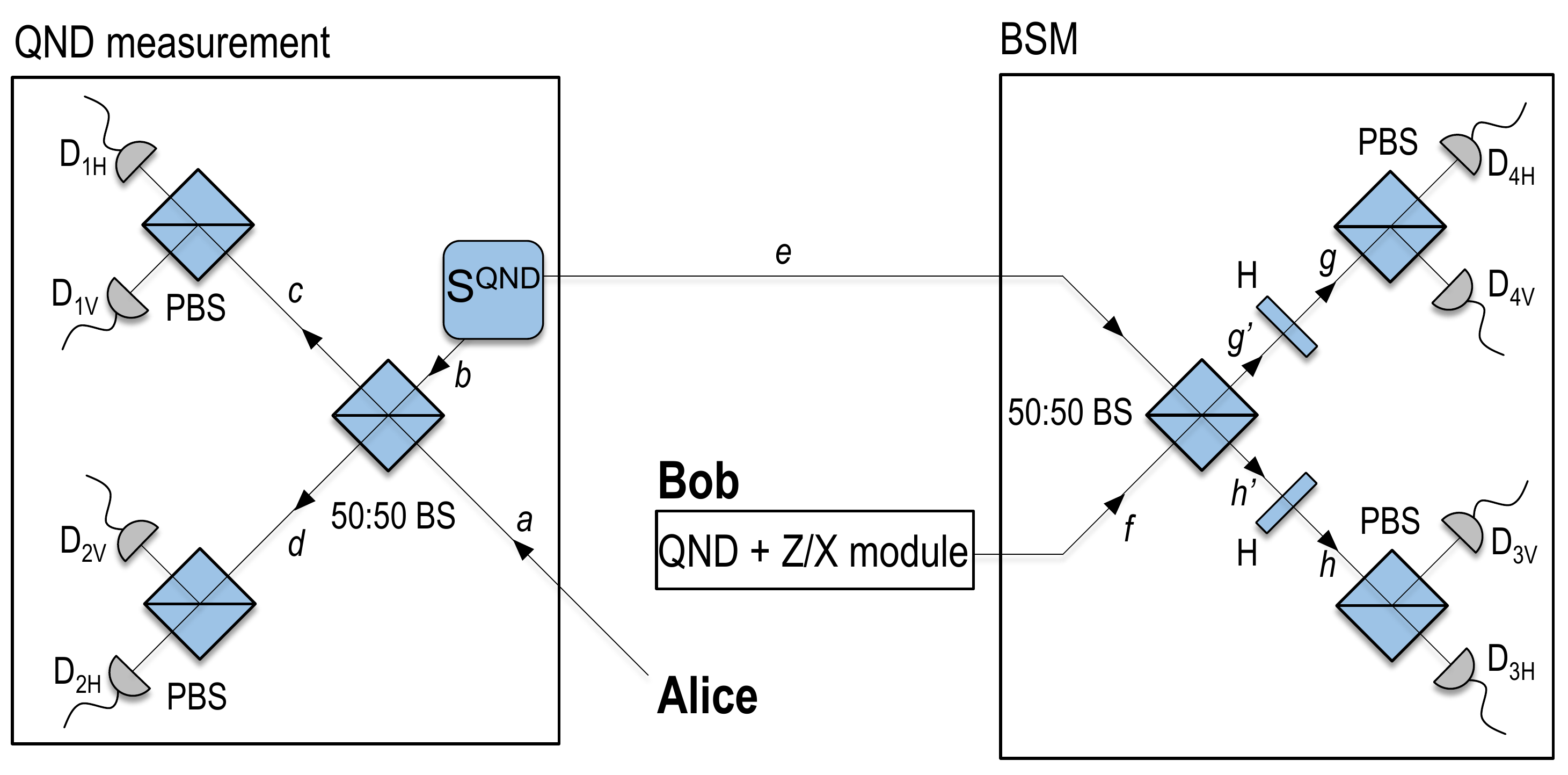}}
  \caption{Schematic of the optical modes used in Appendix~\ref{appendix:ExplanationForHad}. We use 50:50 beam splitters (BS), PNR detectors ($D_{\rm{1H}}$, $D_{\rm{1V}}$, $D_{\rm{2H}}$, $D_{\rm{2V}}$, $D_{\rm{3H}}$, $D_{\rm{3V}}$, $D_{\rm{4H}}$, $D_{\rm{4V}}$)  with detection efficiency $\eta_{\rm{det}}$  and polarizing beam splitters (PBS) that project into horizontal ($H$) or vertical ($V$) polarization. Optical modes are denoted by italic letters.}
  \label{Plot:ExplanationForHad}
  \centering
\end{figure}

Let us illustrate this with an example depicted in Fig.~\ref{Plot:ExplanationForHad}. In particular, let us consider the situation when $S_{\rm{AC}}$ emits the state $\ket{\phi_1}$ and there is a correct detection in Alice's $Z$/$X$ measurement but the photon in the other mode (mode $a$ in Fig.~\ref{Plot:ExplanationForHad}), going towards the QND measurement, is lost in the transmission. Moreover, suppose that the $S^{\rm{QND}}$ source on Alice's side emits the $\ket{\phi_2}$ state, which can cause a seemingly successful QND measurement on Alice's side and also a seemingly successful BSM at Charlie's site.

Furthermore, suppose that the sources $S_{\rm{BC}}$ and $S^{\rm{QND}}$ on Bob's side both emit the $\ket{\phi_1}$ state. Let us assume that these signals cause successful $Z$/$X$ and QND measurements on Bob's side, but the photon in the other mode (mode $f$ in Fig.~\ref{Plot:ExplanationForHad}), going from the $S^{\rm{QND}}$ source towards Charlie's BSM module, is lost, say during the feedforward mechanism. Therefore, in this example, there is no actual signal coming from Bob to the BSM. This way, if the Hadamard gates are not used, it is possible that the $\ket{\phi_2}$ state coming from the $S^{\rm{QND}}$ source on Alice's side will result in a seemingly successful BSM. Consequently, the parties would conclude that the protocol had run correctly and they can obtain a secret key bit, but in reality, they will get random outcomes instead of correlated ones. Next, we show that this cannot occur in the presence of the Hadamard gates.

For simplicity, instead of $\ket{\phi_2}$, we assume that the source $S^{\rm{QND}}$ on Alice's side emits the following unnormalised state
\begin{align}\label{unnormstates}
&\ket{\phi'_2}=2\sqrt 3 \ket{\phi_2}=(b_H^{\dagger}e_H^{\dagger}+b_V^{\dagger}e_V^{\dagger})^2\sket{0}\nonumber\\
&=\left[(b_H^{\dagger})^2(e_H^{\dagger})^2+(b_V^{\dagger})^2(e_V^{\dagger})^2+2 b_H^{\dagger}b_V^{\dagger}e_H^{\dagger}e_V^{\dagger}\right]\sket{0}.
\end{align}

The QND measurement is successful if there are exactly two, orthogonally polarized photons in the modes $c$ and $d$ (see Fig.~\ref{Plot:ExplanationForHad}):

\begin{align}\label{QNDsuccessPattern}
&c_H^{\dagger} d_V^{\dagger}=\frac 12 \left(a_H^{\dagger} a_V^{\dagger}  +  a_V^{\dagger} b_H^{\dagger}  -  a_H^{\dagger} b_V^{\dagger}  -  b_H^{\dagger} b_V^{\dagger} \right) ,
\nonumber\\
&c_H^{\dagger} c_V^{\dagger}=\frac 12  \left(a_H^{\dagger} a_V^{\dagger}   +  a_V^{\dagger} b_H^{\dagger}   +  a_H^{\dagger} b_V^{\dagger}   +  b_H^{\dagger} b_V^{\dagger}\right),
\nonumber\\
&c_V^{\dagger} d_H^{\dagger}= \frac 12  \left(a_H^{\dagger} a_V^{\dagger}   -  a_V^{\dagger} b_H^{\dagger}   +  a_H^{\dagger} b_V^{\dagger}   -  b_H^{\dagger} b_V^{\dagger}\right) , 
\nonumber\\
&d_H^{\dagger} d_V^{\dagger}=\frac 12  \left(a_H^{\dagger} a_V^{\dagger}   -  a_V^{\dagger} b_H^{\dagger}   -  a_H^{\dagger} b_V^{\dagger}   +  b_H^{\dagger} b_V^{\dagger}\right),
\end{align}
where we expressed the output modes $c$ and $d$ after the 50:50 BS in the QND measurement as a function of the input modes $a$ and $b$. As explained above, in the particular example considered, there is no photon in mode $a$. Therefore, a successful event could only be caused by the $b_H^{\dagger} b_V^{\dagger}$ component from Eq.~\eqref{QNDsuccessPattern}. Comparing with $\ket{\phi'_2}$ in Eq.~\eqref{unnormstates}, we conclude, that if the QND measurement on Alice's side succeeded, then the state in mode $e$ has to be characterised by $e_H^{\dagger}e_V^{\dagger}$. 

As explained previously, there is no photon in mode $f$ (coming from Bob). This means that, since there is one horizontally and one vertically polarized photon incident on the 50:50 BS in the BSM (mode $e$), a successful detection pattern in the BSM is easily produced, if there are no Hadamard gates.

Now, let us consider what happens with the Hadamard gates. Similarly to the QND measurement, the BSM is successful if there are exactly two, orthogonally polarized photons in the modes $g$ and $h$:

\begin{align}\label{BSMsuccessPattern1}
g_H^{\dagger} g_V^{\dagger}&=\frac{(e_H^{\dagger})^2}{4} - \frac{(e_V^{\dagger})^2}4 + \frac{e_H^{\dagger} f_H^{\dagger}}2 + \frac{(f_H^{\dagger})^2}4 - \frac{e_V^{\dagger} f_V^{\dagger}}2 
\nonumber\\&- \frac{(f_V^{\dagger})^2}4,
\end{align}
\begin{align}\label{BSMsuccessPattern2}
g_H^{\dagger} h_V^{\dagger}&=\frac{(e_H^{\dagger})^2}{4} - \frac{(e_V^{\dagger})^2}4 - \frac{e_V^{\dagger} f_H^{\dagger}}2- \frac{(f_H^{\dagger})^2}4+ \frac{e_H^{\dagger} f_V^{\dagger}}2 \nonumber\\&+ \frac{(f_V^{\dagger})^2}4,
\end{align}
\begin{align}\label{BSMsuccessPattern3} 
h_V^{\dagger} h_H^{\dagger}&= \frac{(e_H^{\dagger})^2}4 - \frac{(e_V^{\dagger})^2}4 - \frac{e_H^{\dagger} f_H^{\dagger}}2 + \frac{(f_H^{\dagger})^2}4 + \frac{e_V^{\dagger} f_V^{\dagger}}2 \nonumber\\&- \frac{(f_V^{\dagger})^2}4,
\end{align}
\begin{align}\label{BSMsuccessPattern4}
g_V^{\dagger} h_H^{\dagger}&=\frac{(e_H^{\dagger})^2}4 - \frac{(e_V^{\dagger})^2}4 + \frac{e_V^{\dagger} f_H^{\dagger}}2 - \frac{(f_H^{\dagger})^2}4 - \frac{e_H^{\dagger} f_V^{\dagger}}2 \nonumber\\&+ \frac{(f_V^{\dagger})^2}4,
\end{align}
where we expressed the modes $g$ and $h$ after the Hadamard gates as a function of the input modes $e$ and $f$. In doing so, we used the matrices describing the quantum optical operation of the BS and the Hadamard gate. It is clear that in this example, the state $e_H^{\dagger}e_V^{\dagger}$, which we have after the seemingly successful QND measurement, cannot cause a successful BSM, since in Eqs.~\eqref{BSMsuccessPattern1}-\eqref{BSMsuccessPattern4} there is no component that contains $e_H^{\dagger}e_V^{\dagger}$. Therefore, the error that was possible before is now filtered out by the Hadamard gates.

\section{Full-mode analysis}
\label{appendix:FMA} 

\subsection{Rephrasing the secret key rate formula}\label{App:rephrase}
The secret key rate formula, given by Eq.~\eqref{skr}, can be written as~\cite{koji}
\begin{equation}\label{skrAppendix}
R=\,p_{\rm{s}}\,p_{\rm{BSM}}\,\left[1-h(e_{\rm{Z}})-h(e_{\rm{X}})\right],
\end{equation}
where we have set $f=1$. Also, as explained in the main text, we assume that $p_Z\approx 1$ since we consider the asymptotic scenario. We remind the reader that $p_{\rm{s}}$ is the probability that Charlie's QND and the $Z$/$X$ measurements are both successful either at Alice's or Bob's site. The quantity $p_{\rm{BSM}}$ represents the success probability of one BSM. We note that since the $Z$-basis is used for the key generation the above quantities are defined in the case when Alice and Bob choose the $Z$-basis. We also note that the probabilities $p_{\rm{s}}$ and $p_{\rm{BSM}}$ by definition include all the possible detection patterns that constitute that particular success event.

However, due to the symmetries of the channel model, for our simulations, it is not necessary to calculate the probabilities of all the detection patterns that constitute a certain event, which would be rather tedious and redundant. Thus, in the remainder of this section, we are going to relate the quantities $p_{\rm{s}}$, $p_{\rm{BSM}}$, $e_{\rm{Z}}$ and $e_{\rm{X}}$ to the probabilities of some particular detection patterns, relying on the symmetries of the channel model.

\begin{figure}[H]
\centering
  {\includegraphics[scale=0.362]{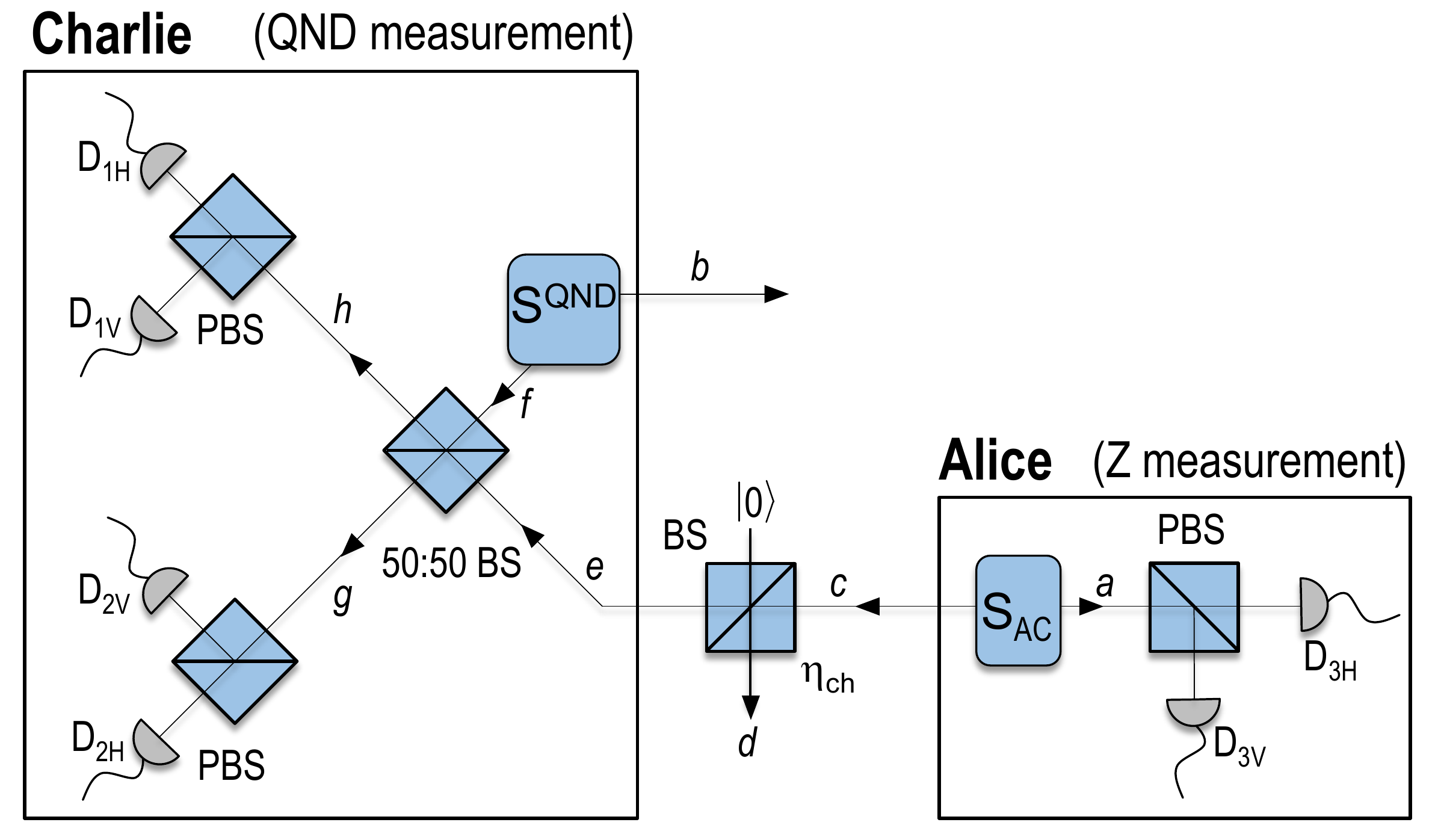}}
  \caption{Layout of the QND and $Z$ measurements. We use 50:50 beam splitters (BS), PNR detectors ($D_{\rm{1H}}$, $D_{\rm{1V}}$, $D_{\rm{2H}}$, $D_{\rm{2V}}$, $D_{\rm{3H}}$, $D_{\rm{3V}}$) with detection efficiency $\eta_{\rm{det}}$ and polarizing beam splitters (PBS) that project into horizontal ($H$) or vertical ($V$) polarization. The quantum channel is modeled by a BS with transmittance $\eta_{\rm{ch}}=\exp(-L/(2L_{\rm{att}}))$, where $L_{\rm{att}}$ is the attenuation length of the optical fiber and $L/2$ is the distance between Alice (Bob) and Charlie. The entanglement sources are denoted by $S_{\rm{AC}}$ and $S^{\rm{QND}}$. Note, that if Alice (Bob) chooses to measure in the $X$-basis, then a Hadamard gate is applied to mode $a$. Optical modes are denoted by italic letters.}
  \label{Plot:QNDZX}
  \centering
\end{figure}

First, let $p_{\rm{QND}}$ denote the probability that there is exactly one photon detected in each of the detectors $D_{\rm{2H}}$, $D_{\rm{2V}}$ and $D_{\rm{3H}}$ and zero photons detected in the other detectors $D_{\rm{1H}}$, $D_{\rm{1V}}$ and $D_{\rm{3V}}$ in Fig.~\ref{Plot:QNDZX}. Note that this means a successful QND measurement and simultaneously a successful $Z$ measurement on Alice's side, where she obtained the horizontal ($H$) polarization, and therefore this particular detection pattern represents one of the patterns that constitute $p_{\rm{s}}$. A successful QND measurement can be realized by four different detection patterns (observing altogether two photons in the QND module, one in $H$ polarization and one in $V$ polarization, \ie, if $D_{\rm{1H}}$ and $D_{\rm{2V}}$, or $D_{\rm{1V}}$ and $D_{\rm{2H}}$, or $D_{\rm{1H}}$ and $D_{\rm{1V}}$, or $D_{\rm{2H}}$ and $D_{\rm{2V}}$ in Fig.~\ref{Plot:QNDZX} detect one photon each). The $Z$ measurement can be realized by two different detection patterns (one photon detected in either $D_{\rm{3H}}$ or $D_{\rm{3V}}$ in Fig.~\ref{Plot:QNDZX}). Altogether, this means eight different possibilities. Note that these eight detection patterns all have the same probability due to the symmetries of the channel model considered. Moreover, $p_{\rm{QND}}$ is also independent of the basis choice of Alice. Therefore, we can write that 
\begin{equation}\label{ps}
p_{\rm{s}}=8\,p_{\rm{QND}}.
\end{equation}

Now, let us also express $p_{\rm{BSM}}$ with probabilities corresponding to particular detection patterns. For this, first, let $p_{\rm{c}}^{\rm{Z}}$ (correct) denote the probability of the following particular detection pattern given that the parties choose to measure their local modes in the $Z$-basis. Suppose that Charlie's QND and the $Z$ measurement were successful on both Alice's and Bob's side with the particular detection pattern described before for $p_{\rm{QND}}$ (\ie, both parties detected $H$ polarization) and then in the BSM after the optical switches there is exactly one photon detected in each of the detectors $D_{\rm{2H}}$ and $D_{\rm{2V}}$ and zero photons detected in the other detectors $D_{\rm{1H}}$ and $D_{\rm{1V}}$ in Fig.~\ref{Plot:BSM}. Note, that this detection pattern corresponds to a projection into the Bell state $\ket{\phi^-}$, which means that the parties will not apply bit flip, this way obtaining correlated (correct) raw key bits. The success probability of the BSM alone, corresponding to the above described particular detection pattern can be written as $p_{\rm{c}}^{\rm{Z}}/p_{\rm{QND}}^2$. We also note that in the BSM after the switches there is another detection pattern that corresponds to obtaining correlated (correct) raw key bits (projection into the Bell state $\ket{\phi^-}$), which happens when there is one photon detected in both $D_{\rm{1H}}$ and $D_{\rm{1V}}$ and zero photons detected in $D_{\rm{2H}}$ and $D_{\rm{2V}}$ in Fig.~\ref{Plot:BSM}. Due to the symmetries, this particular click pattern will also have probability $p_{\rm{c}}^{\rm{Z}}$. Therefore, the contribution to $p_{\rm{BSM}}$ in this case will be $2\,p_{\rm{c}}^{\rm{Z}}/p_{\rm{QND}}^2$.

\begin{figure}[H]
\centering
  {\includegraphics[scale=0.4]{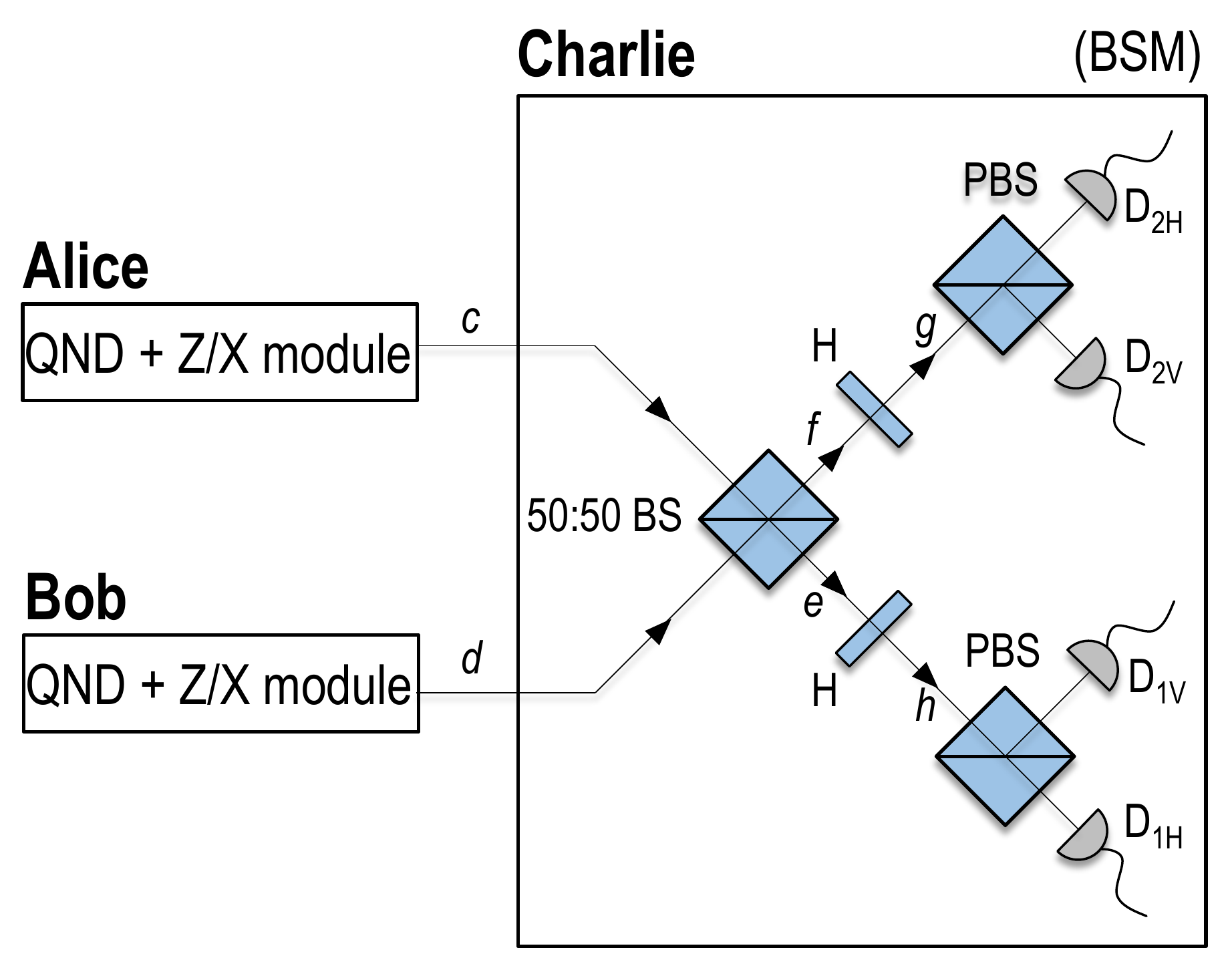}}
  \caption{Layout for the BSM after the optical switches. We use 50:50 beam splitters (BS), Hadamard gates (H), PNR detectors ($D_{\rm{1H}}$, $D_{\rm{1V}}$, $D_{\rm{2H}}$, $D_{\rm{2V}}$) with detection efficiency $\eta'_{\rm{det}}=\eta_{\rm{det}}\eta_{\rm{f}}$ and polarizing beam splitters (PBS) that project into horizontal ($H$) or vertical ($V$) polarization. Here $\eta_{\rm{f}}$ is the channel loss that corresponds to the active feedforward mechanism and in our calculations it has been incorporated into the detection efficiency. Optical modes are denoted by italic letters.}
  \label{Plot:BSM}
  \centering
\end{figure}

Now, let us define $p_{\rm{nc}}^{\rm{Z}}$ (non-correct) in the same way as we defined $p_{\rm{c}}^{\rm{Z}}$, with the only difference being that in the BSM after the optical switches there is exactly one photon detected in each of the detectors $D_{\rm{2H}}$ and $D_{\rm{1V}}$ and zero photons detected in the other detectors $D_{\rm{1H}}$ and $D_{\rm{2V}}$ in Fig.~\ref{Plot:BSM}. Note that this particular detection pattern corresponds to a projection into the Bell state $\ket{\psi^-}$, meaning that one of the parties will apply a bit flip, therefore obtaining anti-correlated (non-correct) raw key bits. Similarly to $p_{\rm{c}}^{\rm{Z}}$, there is another detection pattern that corresponds to obtaining the same anti-correlated raw key bits (one photon detected in both $D_{\rm{1H}}$ and $D_{\rm{2V}}$ and zero photons detected in $D_{\rm{1V}}$ and $D_{\rm{2H}}$ in Fig.~\ref{Plot:BSM}). Again, due to the symmetries, this particular click pattern will also have $p_{\rm{nc}}^{\rm{Z}}$ probability, therefore the contribution to $p_{\rm{BSM}}$ is $2\,p_{\rm{nc}}^{\rm{Z}}/p_{\rm{QND}}^2$ in this case.

The quantity $e_{\rm{X}}$ can be defined and calculated, similarly to the case for $e_{\rm{Z}}$, by considering probabilities with which agreed and disagreed bits are adopted by Alice and Bob. However, for clarity, here we use another method to calculate $e_{\rm{X}}$.
Choosing the $X$-basis means that the parties apply a Hadamard gate on the mode $a$ in Fig.~\ref{Plot:QNDZX} before their local measurement with detectors $D_{\rm{3H}}$ and $D_{\rm{3V}}$.
From the symmetry of the protocol, without loss of generality, we can focus on a specific success event of Charlie where  
Charlie's QND measurements on Alice's side and Bob's side announce single-photon detection in each of the detectors $D_{{\rm 2H}}$ and $D_{{\rm 2V}}$ and zero-photon detection in the other detectors $D_{{\rm 1H}}$ and $D_{{\rm 1V}}$ in Fig.~\ref{Plot:QNDZX}, and Charlie's final Bell measurement announces single-photon detection in each of the detectors $D_{{\rm 2H}}$ and $D_{{\rm 2V}}$ and zero-photon detection in the other detectors  $D_{{\rm 1H}}$ and $D_{{\rm 1V}}$ in Fig.~\ref{Plot:BSM}. 
Then, since Alice and Bob would share entanglement close to $\ket{\phi^-}$, we define $p_{\rm{c}}^{\rm{X}}$ (correct) [$p_{\rm{nc}}^{\rm{X}}$ (non-correct)] as a probability with which Charlie obtains the specific success event and Alice (Bob) detects exactly one photon in the detector $D_{\rm{3H}}$ ($D_{\rm{3V}}$) [$D_{\rm{3H}}$ ($D_{\rm{3H}}$)] on her (his) side and zero photons in the other detector included in her (his) $X$ measurement in Fig.~\ref{Plot:QNDZX}.

With these at our hands, we can write that
\begin{equation}
p_{\rm{BSM}}=\frac{2\left(p_{\rm{c}}^{\rm{Z}}+p_{\rm{nc}}^{\rm{Z}}\right)}{p_{\rm{QND}}^2}
\end{equation} 
and
\begin{equation}
e_{\rm{Z}}=\frac{p_{\rm{nc}}^{\rm{Z}}}{p_{\rm{c}}^{\rm{Z}}+p_{\rm{nc}}^{\rm{Z}}},\qquad e_{\rm{X}}=\frac{p_{\rm{nc}}^{\rm{X}}}{p_{\rm{c}}^{\rm{X}}+p_{\rm{nc}}^{\rm{X}}}.
\end{equation}
Consequently, we can rewrite Eq.~\eqref{skrAppendix} to the following form:
\begin{equation}\label{AdequateSkrAppendix}
R=\frac{16\,(p_{\rm{c}}^{\rm{Z}}+p_{\rm{nc}}^{\rm{Z}})}{p_{\rm{QND}}}\left[1-h\left(\frac{p_{\rm{nc}}^{\rm{Z}}}{p_{\rm{c}}^{\rm{Z}}+p_{\rm{nc}}^{\rm{Z}}}\right)-h\left(\frac{p_{\rm{nc}}^{\rm{X}}}{p_{\rm{c}}^{\rm{X}}+p_{\rm{nc}}^{\rm{X}}}\right)\right].
\end{equation}
The remainder of Appendix~\ref{appendix:FMA} is structured as follows. In Appendix~\ref{App:QND} we derive $p_{\rm{QND}}$, then in Appendix~\ref{App:pcAndpnc} and Appendix~\ref{App:pcAndpncX} we derive $p_{\rm{c}}^{\rm{Z}}$, $p_{\rm{nc}}^{\rm{Z}}$ and $p_{\rm{c}}^{\rm{X}}$, $p_{\rm{nc}}^{\rm{X}}$. And finally, in~\ref{app:Unit} we obtain the secret key rate formula for the 
$\eta_{\rm{det}}=1$ and $\tau=0\,\rm{s}$ case.

\subsection{Derivation of $p_{\rm{QND}}$}\label{App:QND}

The layout for this derivation can be seen in Fig.~\ref{Plot:QNDZX}. For convenience, we use the density matrix formalism for the sources. It is easy to see that converting the emitted state $\ket{\psi}$, given by Eq.~\eqref{source}, into a density matrix in the $\sket {\phi_n}$-basis, only the diagonal terms will give contributions when calculating the probabilities of the different detection patterns (described by the POVMs of Eq.~\eqref{PNRPOVM}, which are diagonal in the Fock basis) since different values of $n$ represent different photon numbers. This means that for our calculations, instead of using the pure states given by Eq.~\eqref{source}, we can use mixed states of the following form:

\begin{align}\label{sources}
\rho_{\rm{AC}}&= \sum_{n=0}^{\infty}\,p_n\, \ketbra {\phi_n} {\phi_n},\nonumber \\
\rho^{\rm{QND}}&= \sum_{m=0}^{\infty}\,q_m\, \ketbra {\varphi_m} {\varphi_m},
\end{align}
with $\sum_{n=0}^{\infty}p_n=1$ and $\sum_{m=0}^{\infty}q_m=1$. For the experimental setup considered, the results in both cases (\ie, by using Eq.~\eqref{source} or Eq.~\eqref{sources}) coincide. The states $\ket {\phi_n}$ and $\ket {\varphi_m}$ are given by
\begin{align}\label{states}
\sket {\phi_n} &=\, \frac{1}{n!\,\sqrt{n+1}}(a_H^{\dagger}c_H^{\dagger}+a_V^{\dagger}c_V^{\dagger})^n\sket{0},
\nonumber \\
\sket {\varphi_m} &=\, \frac{1}{m!\,\sqrt{m+1}}(f_H^{\dagger}b_H^{\dagger}+f_V^{\dagger}b_V^{\dagger})^m\sket{0},
\end{align}
where $a_H^{\dagger}$, $c_H^{\dagger}$, $f_H^{\dagger}$ and $b_H^{\dagger}$ ($a_V^{\dagger}$, $c_V^{\dagger}$, $f_V^{\dagger}$ and $b_V^{\dagger}$) are the creation operators of horizontally (vertically) polarized photons of the corresponding modes. 

Next, we calculate the quantum state from Alice that enters the 50:50 beam splitter (BS) within the QND measurement after travelling through the quantum channel. For this, we model the quantum channel by a BS with transmittance $\eta_{\rm{ch}}=\exp(-L/(2L_{\rm{att}}))$. In doing so, it turns out that such a state is given by

\begin{equation}\label{beforeBSFromAlice}
\rho_{ea}= \sum_{n=0}^{\infty}\,p_n \tr_d\left(\ketbra {\phi^n_{aed}} {\phi^n_{aed}}\right),
\end{equation}
where the states $\ket{\phi^n_{aed}}$ are given by

\begin{align}\label{statesBeforeBS}
&\sket{\phi^n_{aed}}= \frac{1}{n!\,\sqrt{n+1}} \sum_{k=0}^{n} \sum_{x=0}^{k} \sum_{y=0}^{n-k} \binom{n}{k}\binom{k}{x}\binom{n-k}{y} \nonumber\\
&\sqrt{1-\eta_{\rm{ch}}}^{x+y}\sqrt{\eta_{\rm{ch}}}^{n-x-y}a_H^{\dagger k} a_V^{\dagger n-k} e_H^{\dagger k-x} e_V^{\dagger n-k-y} \nonumber\\
 &d_H^{\dagger x} d_V^{\dagger y}\sket{0}.
\end{align}
The 50:50 BS combines the states $\rho_{ea}$ and $\rho^{\rm{QND}}$. So, the state after the 50:50 BS can be written as

\begin{equation}\label{stateAfterBS}
\sigma_{abgh}= \sum_{n=0}^{\infty} \sum_{m=0}^{\infty}\,p_n\,q_m \tr_d\left(\ketbra {\varphi^{nm}_{abghd}} {\varphi^{nm}_{abghd}}\right),
\end{equation}
where the pure states $\ket {\varphi^{nm}_{abghd}}$ have the form of

\begin{align}\label{statesAfterBS}
&\sket{\varphi^{nm}_{abghd}}= \frac{1}{n!m!\,\sqrt{(n+1)(m+1)}} \sum_{k=0}^{n} \sum_{x=0}^{k} \sum_{y=0}^{n-k} \sum_{o=0}^{m} \binom{n}{k}\nonumber\\
&\binom{k}{x}\binom{n-k}{y} \binom{m}{o} \frac{\sqrt{1-\eta_{\rm{ch}}}^{x+y}\sqrt{\eta_{\rm{ch}}}^{n-x-y}}{\sqrt{2}^{n+m-x-y}}\sum_{u=0}^{o} \sum_{v=0}^{m-o} \sum_{w=0}^{k-x}\nonumber\\
&\sum_{z=0}^{n-k-y}\binom{k-x}{w}\binom{n-k-y}{z}\binom{o}{u}\binom{m-o}{v}\nonumber\\ 
&(-1)^{n-x-y-w-z}a_H^{\dagger k} a_V^{\dagger n-k} b_H^{\dagger o} b_V^{\dagger m-o}    g_H^{\dagger u+w} g_V^{\dagger v+z} \nonumber\\
&h_H^{\dagger o-u+k-x-w} h_V^{\dagger n-k-y-z+m-o-v} d_H^{\dagger x} d_V^{\dagger y}\sket{0}.
\end{align}

The quantity $p_{\rm{QND}}$ is defined (see Appendix~\ref{App:rephrase}) by the probability of the event that there is exactly one photon detected in each of the modes $g_h$, $g_v$ and $a_h$ (detectors $D_{\rm{2H}}$, $D_{\rm{2V}}$ and $D_{\rm{3H}}$ in Fig.~\ref{Plot:QNDZX}) and zero photons detected in the modes $h_h$, $h_v$ and $a_v$ (detectors $D_{\rm{1H}}$, $D_{\rm{1V}}$ and $D_{\rm{3V}}$ in Fig.~\ref{Plot:QNDZX}). This event is described by the following POVM
\begin{equation}\label{PNRPOVMSuccess}
\Pi_{\rm{QND}}=\Pi_{1}^{g_h}\otimes\Pi_{1}^{g_v}\otimes\Pi_{0}^{h_h}\otimes\Pi_{0}^{h_v}\otimes\Pi_{1}^{a_h}\otimes\Pi_{0}^{a_v},
\end{equation}
where we extended the notation used in Eq.~\eqref{PNRPOVM} with including the corresponding optical mode as a superscript. 

The unnormalised state that enters Charlie's BSM module from Alice (Bob) is then given by
\begin{align}\label{AfterMeasState}
&\Gamma_{b}= \tr_{gha}\left(\Pi_{\rm{QND}}\, \sigma_{abgh}\right)\nonumber \\&=\sum_{n=0}^{\infty} \sum_{m=0}^{\infty}\,p_n\,q_m \tr_{ghad}\left(\Pi_{\rm{QND}}\,\ketbra {\varphi^{nm}_{abghd}} {\varphi^{nm}_{abghd}}\right).
\end{align}
We repeatedly make use of the following transformation of the summation indices, whenever we calculate the trace of an expression.
\begin{align}\label{method}
\sum_{A=0}^{C}\sum_{B=0}^{D}f(A,B)=\sum_{G=0}^{C+D}\sum_{A=\max\lbrace 0,G-D\rbrace}^{\min\lbrace G,C\rbrace}f(A,G-A),
\end{align}
where $f$ is an arbitrary function of the indices $A$, $B$ and we introduced the sum $G=A+B$ of the summation indices.
Carrying out the calculations, it can be shown, by using Eq.~\eqref{method}, that $\Gamma_b$ can be put into the following form
\begin{align}\label{FinalExpression}
&\Gamma_{b}=\sum_{n=1}^{\infty} \sum_{m=0}^{\infty}\,p_n\,q_m \gamma_{b}(n,m),
\end{align}
where $\gamma_{b}(n,m)$ can be written as
\begin{align}\label{gamma}
&\gamma_{b}(n,m)=\sum_{k=1}^{n}\sum_{x=0}^{k}\sum_{y=0}^{n-k}\sum_{o=0}^{m}\sum_{s=1}^{o+k-x}\sum_{t=1}^{m-o+n-k-y}\nonumber \\
&\sum_{u=\max\lbrace 0,s-(k-x) \rbrace}^{\min\lbrace o,s \rbrace}\sum_{v=\max\lbrace 0,t-(n-k-y) \rbrace}^{\min\lbrace m-o,t \rbrace} \sum_{u'=\max\lbrace 0,s-(k-x) \rbrace}^{\min\lbrace o,s \rbrace}\nonumber \\
&\sum_{v'=\max\lbrace 0,t-(n-k-y) \rbrace}^{\min\lbrace m-o,t \rbrace} \Lambda(n,m,k,x,y,o,s,t,u,v,u',v')\nonumber \\
&\ketbraMode{o,m-o}{o,m-o}{b},
\end{align}
where $\ket{i,j}_b$ denotes that there are $i$ and $j$ photons in modes $b_h$ and $b_v$, respectively. The quantity $\Lambda(n,m,k,x,y,o,s,t,u,v,u',v')$, on the other hand, equals to 

\begin{widetext}
\begin{align}\label{Lambda}
&\Lambda(n,m,k,x,y,o,s,t,u,v,u',v')=\frac{s\,t\,k\,\eta_{\rm{det}}^3\,(1-\eta_{\rm{det}})^{2n+m-x-y-3}\, k!\,(n-k)!\,o!\,(m-o)!\,s!\,t!\,(o+k-x-s)!}{(n+1)\,(m+1)\,x!\,y!\,(k-x+u-s)!\,(k-x+u'-s)!\,(s-u)!\,(s-u')}\nonumber \\
&\frac{(n-k-y+m-o-t)!\,(-1)^{u+v+u'+v'}}{(n-k-y+v-t)!\,(t-v)!\,(n-k-y+v'-t)!\,(t-v')!\,(o-u)!\,u!\,(o-u')!\,u'!\,(m-o-v)!\,v!\,(m-o-v')!\,v'!}\nonumber \\
&\frac{1}{2^{n+m-x-y}}\,\eta_{\rm{ch}}^{n-x-y}\,(1-\eta_{\rm{ch}})^{x+y}.
\end{align}
\end{widetext}

The probability $p_{\rm{QND}}$ can be obtained as the normalization factor of $\Gamma_b$:

\begin{align}\label{probabilityQNDZ}
&p_{\rm{QND}}=\tr[\Gamma_b]=\sum_{n=1}^{\infty} \sum_{m=0}^{\infty}\,p_n\,q_m\sum_{k=1}^{n}\sum_{x=0}^{k}\sum_{y=0}^{n-k}\sum_{o=0}^{m}\nonumber \\
&\sum_{s=1}^{o+k-x}\sum_{t=1}^{m-o+n-k-y}\sum_{u=\max\lbrace 0,s-(k-x) \rbrace}^{\min\lbrace o,s \rbrace}\sum_{v=\max\lbrace 0,t-(n-k-y) \rbrace}^{\min\lbrace m-o,t \rbrace}\nonumber \\
& \sum_{u'=\max\lbrace 0,s-(k-x) \rbrace}^{\min\lbrace o,s \rbrace}\sum_{v'=\max\lbrace 0,t-(n-k-y) \rbrace}^{\min\lbrace m-o,t \rbrace} \nonumber \\
&\Lambda(n,m,k,x,y,o,s,t,u,v,u',v').
\end{align}
\subsection{Derivation of $p_{\rm{c}}^{\rm{Z}}$ and $p_{\rm{nc}}^{\rm{Z}}$}\label{App:pcAndpnc}
The layout for this derivation can be seen in Fig.~\ref{Plot:BSM}. Using the previous result for the state coming from the QND and the $Z$ measurement, which is given by Eq.~\eqref{FinalExpression}, the state that enters Charlie's BSM module from Alice's (Bob's) side is $\Gamma_{c}$ ($\Gamma_{d}$). Therefore, their collective state can be written as
\begin{widetext}
\begin{align}\label{InputBSM}
&\Gamma_{c}\otimes\Gamma_{d}= \sum_{n=1}^{\infty} \sum_{m=0}^{\infty}\sum_{N=1}^{\infty} \sum_{M=0}^{\infty}\,p_n\,q_m\,p_N\,q_M \gamma_{c}(n,m)\otimes\gamma_{d}(N,M)=  \sum_{n=1}^{\infty} \sum_{m=0}^{\infty}\sum_{k=1}^{n}\sum_{x=0}^{k}\sum_{y=0}^{n-k}\sum_{o=0}^{m}\sum_{s=1}^{o+k-x}\sum_{t=1}^{m-o+n-k-y}\nonumber \\
&\sum_{u=\max\lbrace 0,s-(k-x) \rbrace}^{\min\lbrace o,s \rbrace}\sum_{v=\max\lbrace 0,t-(n-k-y) \rbrace}^{\min\lbrace m-o,t \rbrace} \sum_{u'=\max\lbrace 0,s-(k-x) \rbrace}^{\min\lbrace o,s \rbrace}\sum_{v'=\max\lbrace 0,t-(n-k-y) \rbrace}^{\min\lbrace m-o,t \rbrace}\sum_{N=1}^{\infty} \sum_{M=0}^{\infty}\sum_{K=1}^{N}\sum_{X=0}^{K}\sum_{Y=0}^{N-K}\sum_{O=0}^{M}\sum_{S=1}^{O+K-X}\nonumber \\
&\sum_{T=1}^{M-O+N-K-Y}\sum_{U=\max\lbrace 0,S-(K-X) \rbrace}^{\min\lbrace O,S \rbrace}\sum_{V=\max\lbrace 0,T-(N-K-Y) \rbrace}^{\min\lbrace M-O,T \rbrace} \sum_{U'=\max\lbrace 0,S-(K-X) \rbrace}^{\min\lbrace O,S \rbrace}\sum_{V'=\max\lbrace 0,T-(N-K-Y) \rbrace}^{\min\lbrace M-O,T \rbrace}\,p_n\,q_m\,p_N\,q_M\,\Lambda_c\,\Lambda_d\nonumber \\
&\ketbraMode{o,m-o,O,M-O}{o,m-o,O,M-O}{cd}=\nonumber \\
&=\,\sum_{n,\cdots ,V'} \,p_n\,q_m\,p_N\,q_M\,\Lambda_c\,\Lambda_d\,\ketbraMode{o,m-o,O,M-O}{o,m-o,O,M-O}{cd},
\end{align}
\end{widetext}
where note that upper case indices are used to describe quantities corresponding to mode $d$ (coming from Bob's side), while using lower case indices to describe quantities corresponding to mode $c$ (coming from Alice's side). For simplicity, in Eq.~\eqref{InputBSM} we introduced the following notation:
\begin{align}\label{notationLambda}
&\Lambda_c\equiv\Lambda(n,m,k,x,y,o,s,t,u,v,u',v'),\nonumber \\
&\Lambda_d\equiv\Lambda(N,M,K,X,Y,O,S,T,U,V,U',V').
\end{align}
For brevity, after the last equation sign in Eq.~\eqref{InputBSM} we denoted all the sums  collectively by $\sum_{n,\cdots ,V'}$. The state after the 50:50 BS can then be written concisely as
\begin{align}\label{AfterBSInBSM}
\rho_{ef}=\sum_{n,\cdots ,V'} p_n\,q_m\,p_N\,q_M\,\Lambda_c\,\Lambda_d\ketbra{\psi^{omOM}_{ef}}{\psi^{omOM}_{ef}},
\end{align}
where the states $\ket{\psi^{omOM}_{ef}}$ are given by
\begin{align}\label{AfterBSInBSMPureStates}
&\ket{\psi^{omOM}_{ef}}=\frac{1}{\sqrt{o!\,(m-o)!\,O!\,(M-O)!}}\left(\frac{1}{\sqrt{2}}\right)^{m+M}\nonumber \\
&\sum_{l=0}^{o}\sum_{q=0}^{m-o}\sum_{L=0}^{O}\sum_{Q=0}^{M-O}\binom o l\binom O L \binom{m-o}{q}\binom{M-O}{Q}\nonumber \\
&\, (-1)^{m-l-q} e_H^{\dagger o+O-l-L} e_V^{\dagger m+M-o-O-q-Q} f_H^{\dagger l+L} f_V^{\dagger q+Q}\ket{0}.
\end{align}

Then, the state after the Hadamard gates can be written as
\begin{align}\label{AfterHInBSM}
\rho_{gh}=\sum_{n,\cdots ,V'} p_n\,q_m\,p_N\,q_M\,\Lambda_c\,\Lambda_d\ketbra{\chi^{omOM}_{gh}}{\chi^{omOM}_{gh}},
\end{align}
where the pure states $\ket{\chi^{omOM}_{gh}}$ have the form 
\begin{widetext}
\begin{align}\label{AfterHInBSMPureStates}
&\ket{\chi^{omOM}_{gh}}=\sum_{l=0}^{o}\sum_{q=0}^{m-o}\sum_{L=0}^{O}\sum_{Q=0}^{M-O} \sum_{\alpha=0}^{l+L}\sum_{\beta=0}^{q+Q}\sum_{\varphi=0}^{o+O-l-L}\sum_{\varepsilon=0}^{m+M-o-O-q-Q} \frac{1}{\sqrt{o!\,(m-o)!\,O!\,(M-O)!}}\left(\frac{1}{2}\right)^{m+M}\nonumber \\
&(-1)^{M-l-q-o-O-\varepsilon-\beta}\binom o l\binom O L \binom{m-o}{q}\binom{M-O}{Q}\binom{l+L}{\alpha}\binom{q+Q}{\beta}\binom{o+O-l-L}{\varphi}\binom{m+M-o-O-q-Q}{\varepsilon}\nonumber\\
& g_H^{\dagger \alpha+\beta}h_H^{\dagger \varepsilon+\varphi}\, g_V^{\dagger l+L+q+Q-\alpha-\beta}h_V^{\dagger m+M-l-L-q-Q-\varepsilon-\varphi}\ket{0}.
\end{align}
\end{widetext}

With Eq.~\eqref{method}, we can rewrite the states $\ket{\chi^{omOM}_{gh}}$ into a form, in which it will be more convenient to take the trace of Eq.~\eqref{AfterHInBSM}:
\begin{widetext}
\begin{align}\label{AfterHInBSMPureStatesVer2}
&\ket{\chi^{omOM}_{gh}}=\sum_{I=0}^{m+M}\sum_{\Omega =0}^{m+M}\sum_{i=\max\lbrace 0,I+o+O-m-M\rbrace}^{\min\lbrace I,o+O\rbrace} \nonumber\sum_{\omega=\max\lbrace 0,\Omega+I-m-M\rbrace}^{\min\lbrace \Omega,I\rbrace} \sum_{l=\max\lbrace 0,i-O\rbrace}^{\min\lbrace i,o\rbrace}\sum_{q=\max\lbrace 0,I+O-i-M\rbrace}^{\min\lbrace I-i,m-o\rbrace}\sum_{\alpha=\max\lbrace 0,\omega+i-I\rbrace}^{\min\lbrace \omega,i\rbrace}\nonumber\\
&\sum_{\varphi=\max\lbrace 0,\Omega+O+I+o-\omega-i-m-M\rbrace}^{\min\lbrace \Omega-\omega,o+O-i\rbrace}\ \sqrt{\frac{\omega!(I-\omega)!(\Omega-\omega)!(m+M+\omega-I-\Omega)!}{o!\,(m-o)!\,O!\,(M-O)!}}\left(\frac{1}{2}\right)^{m+M}(-1)^{M+\varphi+\alpha-l-q-o-O-\Omega}\binom o l\nonumber\\
&\binom O {i-l} \binom{m-o}{q}\binom{M-O}{I-i-q}\binom{i}{\alpha}\binom{I-i}{\omega-\alpha}\binom{o+O-i}{\varphi}\binom{m+M+i-o-O-I}{\Omega-\omega-\varphi}  \nonumber\\
&\ket{\omega,I-\omega,\Omega-\omega, m+M+\omega-I-\Omega}_{gh}.
\end{align}
\end{widetext}
The quantity $p_{\rm{c}}^{\rm{Z}}$ is defined (see Appendix~\ref{App:rephrase}) by the probability of the event that, one photon is observed in each of the modes $g_h$ and $g_v$ (one detection in each of the detectors $D_{\rm{2H}}$ and $D_{\rm{2V}}$ in Fig.~\ref{Plot:BSM}) and zero photons are observed in each of the modes $h_h$ and $h_v$ (no detection in neither of the detectors $D_{\rm{1H}}$ and $D_{\rm{1V}}$ in Fig.~\ref{Plot:BSM}) 

Similarly, $p_{\rm{nc}}^{\rm{Z}}$ is by definition equal to the probability of the event that one photon is observed in each of the modes $g_h$ and $h_v$ (one detection in each of the detectors $D_{\rm{2H}}$ and $D_{\rm{1V}}$ in Fig.~\ref{Plot:BSM}) and zero photons are observed in each of the modes $g_v$ and $h_h$ (no detection in neither of the detectors $D_{\rm{1H}}$ and $D_{\rm{2V}}$ in Fig.~\ref{Plot:BSM}).

These events are described by the following POVMs
\begin{equation}\label{pcPOVM}
\Pi_{c}=\Pi_{1}^{g_h}\otimes\Pi_{1}^{g_v}\otimes\Pi_{0}^{h_h}\otimes\Pi_{0}^{h_v},
\end{equation}
and
\begin{equation}\label{pncPOVM}
\Pi_{nc}=\Pi_{1}^{g_h}\otimes\Pi_{0}^{g_v}\otimes\Pi_{0}^{h_h}\otimes\Pi_{1}^{h_v}.
\end{equation}
Since for convenience in the calculations we incorporate the loss corresponding to the active feedforward mechanism into the efficiency of the detectors in the BSM module, the efficiency becomes
\begin{equation}\label{detEffBSM}
\eta'_{\rm{det}}=\eta_{\rm{det}}\eta_{\rm{f}},\,\text{with}\quad \eta_{\rm{f}}=\exp(-\,c\,\tau/L_{\rm{att}}).
\end{equation} 
This is the efficiency assumed in the POVM elements given by Eq.~\eqref{pcPOVM} and Eq.~\eqref{pncPOVM}, where $\tau$ is the necessary time for performing one active feedforward and $c$ is the speed of light in the optical fiber. 

Therefore, $p_{\rm{c}}^{\rm{Z}}$ and $p_{\rm{nc}}^{\rm{Z}}$ can be calculated as follows
\begin{equation}\label{pc}
p_{\rm{c}}^{\rm{Z}}= \tr_{gh}\left(\Pi_{\rm{c}}\, \rho_{gh}\right),
\end{equation}
and
\begin{equation}\label{pnc}
p_{\rm{nc}}^{\rm{Z}}= \tr_{gh}\left(\Pi_{\rm{nc}}\, \rho_{gh}\right),
\end{equation}
with $\rho_{gh}$ given by Eq.~\eqref{AfterHInBSM}.
Carrying out the calculations and plugging all the indices back in, we find that $p_{\rm{c}}^{\rm{Z}}$ and $p_{\rm{nc}}^{\rm{Z}}$ are given by the following formulas:
\begin{widetext}
\begin{align}\label{pcFinal}
&p_{\rm{c}}^{\rm{Z}}=\sum_{n=1}^{\infty} \sum_{m=0}^{\infty}\sum_{N=1}^{\infty}\sum_{M=0}^{\infty}p_n\,q_m\,p_N\,q_M\sum_{k=1}^{n}\sum_{x=0}^{k}\sum_{y=0}^{n-k}\sum_{o=0}^{m}\sum_{s=1}^{o+k-x}\sum_{t=1}^{m-o+n-k-y}\sum_{u=\max\lbrace 0,s+x-k \rbrace}^{\min\lbrace o,s \rbrace}\sum_{v=\max\lbrace 0,t+k+y-n \rbrace}^{\min\lbrace m-o,t \rbrace} \sum_{u'=\max\lbrace 0,s+x-k \rbrace}^{\min\lbrace o,s \rbrace}\nonumber\\
& \sum_{v'=\max\lbrace 0,t+k+y-n \rbrace}^{\min\lbrace m-o,t \rbrace}\sum_{K=1}^{N}\sum_{X=0}^{K}\sum_{Y=0}^{N-K}\sum_{O=0}^{M}\sum_{S=1}^{O+K-X}\sum_{T=1}^{M-O+N-K-Y}\sum_{U=\max\lbrace 0,S+X-K \rbrace}^{\min\lbrace O,S \rbrace}\sum_{V=\max\lbrace 0,T+K+Y-N \rbrace}^{\min\lbrace M-O,T \rbrace} \sum_{U'=\max\lbrace 0,S+X-K \rbrace}^{\min\lbrace O,S \rbrace}\nonumber \\
&\sum_{V'=\max\lbrace 0,T+K+Y-N \rbrace}^{\min\lbrace M-O,T \rbrace}\,\Lambda_c\,\Lambda_d\sum_{I=1}^{m+M}\sum_{\Omega=1}^{m+M}\sum_{i=\max\lbrace 0,I+o+O-m-M\rbrace}^{\min\lbrace I,o+O\rbrace} \nonumber\sum_{\omega=\max\lbrace 1,\Omega+I-m-M\rbrace}^{\min\lbrace \Omega,I-1\rbrace} \sum_{l=\max\lbrace 0,i-O\rbrace}^{\min\lbrace i,o\rbrace}\sum_{q=\max\lbrace 0,I+O-i-M\rbrace}^{\min\lbrace I-i,m-o\rbrace}\nonumber \\
&\sum_{\alpha=\max\lbrace 0,\omega+i-I\rbrace}^{\min\lbrace \omega,i\rbrace}\sum_{\varphi=\max\lbrace 0,\Omega+O+I+o-\omega-i-m-M\rbrace}^{\min\lbrace \Omega-\omega,o+O-i\rbrace}   \sum_{i'=\max\lbrace 0,I+o+O-m-M\rbrace}^{\min\lbrace I,o+O\rbrace} \nonumber\sum_{l'=\max\lbrace 0,i'-O\rbrace}^{\min\lbrace i',o\rbrace}\sum_{q'=\max\lbrace 0,I+O-i'-M\rbrace}^{\min\lbrace I-i',m-o\rbrace}\nonumber \\
&\sum_{\alpha'=\max\lbrace 0,\omega+i'-I\rbrace}^{\min\lbrace \omega,i'\rbrace}\sum_{\varphi'=\max\lbrace 0,\Omega+O+I+o-\omega-i'-m-M\rbrace}^{\min\lbrace \Omega-\omega,o+O-i'\rbrace}\,\omega\,(I-\omega)\,(\eta'_{\rm{det}})^2 \,(1-\eta'_{\rm{det}})^{m+M-2}\nonumber \\
&G(m,o,M,O,I,\Omega,i,\omega,l,q,\alpha, \varphi,i',l',q',\alpha', \varphi' ),
\end{align}
\end{widetext}
and
\begin{widetext}
\begin{align}\label{pncFinal}
&p_{\rm{nc}}^{\rm{Z}}=\sum_{n=1}^{\infty} \sum_{m=0}^{\infty}\sum_{N=1}^{\infty}\sum_{M=0}^{\infty}p_n\,q_m\,p_N\,q_M\sum_{k=1}^{n}\sum_{x=0}^{k}\sum_{y=0}^{n-k}\sum_{o=0}^{m}\sum_{s=1}^{o+k-x}\sum_{t=1}^{m-o+n-k-y}\sum_{u=\max\lbrace 0,s+x-k \rbrace}^{\min\lbrace o,s \rbrace}\sum_{v=\max\lbrace 0,t+k+y-n \rbrace}^{\min\lbrace m-o,t \rbrace} \sum_{u'=\max\lbrace 0,s+x-k \rbrace}^{\min\lbrace o,s \rbrace}\nonumber\\
& \sum_{v'=\max\lbrace 0,t+k+y-n \rbrace}^{\min\lbrace m-o,t \rbrace}\sum_{K=1}^{N}\sum_{X=0}^{K}\sum_{Y=0}^{N-K}\sum_{O=0}^{M}\sum_{S=1}^{O+K-X}\sum_{T=1}^{M-O+N-K-Y}\sum_{U=\max\lbrace 0,S+X-K \rbrace}^{\min\lbrace O,S \rbrace}\sum_{V=\max\lbrace 0,T+K+Y-N \rbrace}^{\min\lbrace M-O,T \rbrace} \sum_{U'=\max\lbrace 0,S+X-K \rbrace}^{\min\lbrace O,S \rbrace}\nonumber \\
&\sum_{V'=\max\lbrace 0,T+K+Y-N \rbrace}^{\min\lbrace M-O,T \rbrace}\,\Lambda_c\,\Lambda_d\sum_{I=1}^{m+M}\sum_{\Omega=1}^{m+M}\sum_{i=\max\lbrace 0,I+o+O-m-M\rbrace}^{\min\lbrace I,o+O\rbrace}\sum_{\omega=\max\lbrace 1,1+\Omega+I-m-M \rbrace}^{\min\lbrace \Omega,I\rbrace} \sum_{l=\max\lbrace 0,i-O\rbrace}^{\min\lbrace i,o\rbrace}\sum_{q=\max\lbrace 0,I+O-i-M\rbrace}^{\min\lbrace I-i,m-o\rbrace}\nonumber \\
&\sum_{\alpha=\max\lbrace 0,\omega+i-I\rbrace}^{\min\lbrace \omega,i\rbrace}\sum_{\varphi=\max\lbrace 0,\Omega+O+I+o-\omega-i-m-M\rbrace}^{\min\lbrace \Omega-\omega,o+O-i\rbrace}   \sum_{i'=\max\lbrace 0,I+o+O-m-M\rbrace}^{\min\lbrace I,o+O\rbrace} \nonumber\sum_{l'=\max\lbrace 0,i'-O\rbrace}^{\min\lbrace i',o\rbrace}\sum_{q'=\max\lbrace 0,I+O-i'-M\rbrace}^{\min\lbrace I-i',m-o\rbrace}\nonumber \\
&\sum_{\alpha'=\max\lbrace 0,\omega+i'-I\rbrace}^{\min\lbrace \omega,i'\rbrace}\sum_{\varphi'=\max\lbrace 0,\Omega+O+I+o-\omega-i'-m-M\rbrace}^{\min\lbrace \Omega-\omega,o+O-i'\rbrace}\,\omega\,(m+M+\omega-I-\Omega)\,(\eta'_{\rm{det}})^2 \,(1-\eta'_{\rm{det}})^{m+M-2}\,\nonumber \\
&G(m,o,M,O,I,\Omega,i,\omega,l,q,\alpha, \varphi,i',l',q',\alpha', \varphi' ).
\end{align}
\end{widetext}  

Note, that the only differences between $p_{\rm{c}}^{\rm{Z}}$ and $p_{\rm{nc}}^{\rm{Z}}$ are in the limits of the index $\omega$ and in the expression after the sums. Moreover, the term $G(m,o,M,O,I,\Omega,i,\omega,l,q,\alpha, \varphi,i',l',q',\alpha', \varphi' )$ is given by the following expression

\begin{widetext}
\begin{align}\label{G}
&G(m,o,M,O,I,\Omega,i,\omega,l,q,\alpha, \varphi,i',l',q',\alpha', \varphi')=\frac{\omega!(I-\omega)!(\Omega-\omega)!(m+M+\omega-I-\Omega)!\,(-1)^{\varphi+\alpha+\varphi'+\alpha'-l-q-l'-q'}}{o!\,(m-o)!\,O!\,(M-O)!\,4^{m+M}}\nonumber\\
&\binom o l\binom O {i-l} \binom{m-o}{q}\binom{M-O}{I-i-q}\binom{i}{\alpha}\binom{I-i}{\omega-\alpha}\binom{o+O-i}{\varphi}\binom{m+M+i-o-O-I}{\Omega-\omega-\varphi} \binom o {l'}\binom O {i'-l'} \binom{m-o}{q'}\nonumber\\
&\binom{M-O}{I-i'-q'}\binom{i'}{\alpha'}\binom{I-i'}{\omega-\alpha'}\binom{o+O-i'}{\varphi'}\binom{m+M+i'-o-O-I}{\Omega-\omega-\varphi'}.
\end{align}
\end{widetext} 

\subsection{Derivation of $p_{\rm{c}}^{\rm{X}}$ and $p_{\rm{nc}}^{\rm{X}}$}\label{App:pcAndpncX}

The derivation of the probabilities $p_{\rm{c}}^{\rm{X}}$ and $p_{\rm{nc}}^{\rm{X}}$ is very similar to the derivation of $p_{\rm{c}}^{\rm{Z}}$ and $p_{\rm{nc}}^{\rm{Z}}$ in Appendix~\ref{App:pcAndpnc}. However, for the $X$-basis we perform the derivation with a slightly different structure, that is, we perform the $X$ measurements at the very end, after Charlie's QND measurement and BSM have gone through successfully. However, we remark that the calculations could also be done using the same structure like in Appendix~\ref{App:pcAndpnc}.

So firstly, we obtain the state that Alice (Bob) has after Charlie's QND measurement was performed successfully on her (his) side, which is described by the following POVM:
\begin{equation}\label{QNDPOVMX}
\Pi=\Pi_{1}^{g_h}\otimes\Pi_{1}^{g_v}\otimes\Pi_{0}^{h_h}\otimes\Pi_{0}^{h_v},
\end{equation}
where note that we use the same notation for the modes as in Fig.~\ref{Plot:QNDZX} and we excluded the $X$ measurement so far.  Let us denote this state by $\sigma_{AC}$ ($\sigma_{BC}$) on Alice's (Bob's) side. Then, we take the tensor product $\sigma_{AC}\otimes\sigma_{BC}$ and perform the middle BSM with the POVM of Eq.~\eqref{QNDPOVMX}, but here the modes correspond to Fig.~\ref{Plot:BSM} since this is the measurement being executed. Next, on the obtained state from the BSM, the parties perform the $X$ measurement, which essentially means that they apply Hadamard gates on their modes and after that they perform the $Z$ measurement in Fig.~\ref{Plot:QNDZX}. Let $\sigma'_{ab}$ denote the state held by Alice and Bob after they apply the Hadamard gates, where $a$ ($b$) represents the optical mode entering the PNR detectors in the $Z$ measurement at Alice's (Bob's) site. 

Considering the detection patterns that define the correlated (correct) and anti-correlated (non-correct) raw key generation events, as explained previously in Appendix~\ref{App:rephrase}, we have that
\begin{equation}\label{pcX}
p_{\rm{c}}^{\rm{X}}= \tr_{ab}\left(\Pi^{\rm{X}}_{\rm{c}}\, \sigma'_{ab}\right),
\end{equation}

\begin{equation}\label{pncX}
p_{\rm{nc}}^{\rm{X}}= \tr_{ab}\left(\Pi^{\rm{X}}_{\rm{nc}}\, \sigma'_{ab}\right),
\end{equation}
with
\begin{equation}\label{POVMXc}
\Pi^{\rm{X}}_{\rm{c}}=\Pi_{1}^{a_h}\otimes\Pi_{0}^{a_v}\otimes\Pi_{0}^{b_h}\otimes\Pi_{1}^{b_v},
\end{equation}
and
\begin{equation}\label{POVMXnc}
\Pi^{\rm{X}}_{\rm{nc}}=\Pi_{1}^{a_h}\otimes\Pi_{0}^{a_v}\otimes\Pi_{1}^{b_h}\otimes\Pi_{0}^{b_v}.
\end{equation}

Here, for simplicity we only present the results for the main steps of the derivation. The state $\sigma_{AC}$ is given by the following formula:
\begin{align}\label{QNDX}
&\sigma_{AC}=\sum_{n=0}^{\infty} \sum_{m=0}^{\infty}\,p_n\,q_m \sum_{k=0}^{n}\sum_{x=0}^{k}\sum_{y=0}^{n-k}\sum_{o=0}^{m}\sum_{k'=\max \lbrace x,o+k-m \rbrace}^{\min \lbrace n-y,o+k \rbrace}\nonumber \\
& \Lambda_{\rm{QND}}(n,m,k,x,y,o,k')\,\ketMode{k,n-k,o,m-o}{a'c}\nonumber \\
&\braMode{k',n-k',o+k-k',m-o-k+k'}{a'c},
\end{align}
where $a'$ ($c$) is the mode that enters Alice's $X$ measurement (Charlie's BSM from Alice's side) and $\Lambda_{\rm{QND}}(n,m,k,x,y,o,k')$ is given by the following expression:

\begin{widetext}
\begin{align}\label{LambdaQND}
&\Lambda_{\rm{QND}}(n,m,k,x,y,o,k')=\frac{\eta_{\rm{det}}^2\,(1-\eta_{\rm{det}})^{n+m-x-y-2}\,\eta_{\rm{ch}}^{n-x-y}\,(1-\eta_{\rm{ch}})^{x+y} \sqrt{k!\,(n-k)!(k')!\,(n-k')!\,o!\,(m-o)!}}{(n+1)\,(m+1)\,x!\,y!\,2^{n+m-x-y}}\nonumber \\
&\sqrt{(o+k-k')!(m-o-k+k')!}\sum_{u=0}^{o}\sum_{v=0}^{m-o}\sum_{w=\max\lbrace 0,1-u \rbrace}^{k-x}\sum_{z=\max\lbrace 0,1-v \rbrace}^{n-k-y} \sum_{u'=\max\lbrace 0,u+w+x-k' \rbrace}^{\min\lbrace u+w,o+k-k' \rbrace}\sum_{v'=\max\lbrace 0,v+z+y+k'-n \rbrace}^{\min\lbrace v+z,m-o-k+k' \rbrace}\nonumber \\
&\frac{(-1)^{u'+v'-u-v}(u+w)(v+z)(u+w)!(v+z)!(o-u+k-x-w)!(n-k-y-z+m-o-v)!}{u!\,v!\,w!\,z!\,(u')!\,(v')!(k-x-w)!(n-k-y-z)!(o-u)!(m-o-v)!(u+w-u')!(k'-x-u-w+u')!(v+z-v')!}\nonumber \\
&\frac{1}{(n-k'-y-v-z+v')!\,(o+k-k'-u')!\,(m-o-k+k'-v')!}.
\end{align}
\end{widetext}
We note that $\sigma_{BC}$ has the exact same form with mode $b'$ ($d$) entering Bob's $X$ measurement (Charlie's BSM from Bob's side) and in the expression of $\sigma_{BC}$ we use capital letter indices similarly to Appendix~\ref{App:pcAndpnc}:
\begin{align}\label{QNDXBC}
&\sigma_{BC}=\sum_{N=0}^{\infty} \sum_{M=0}^{\infty}\,p_N\,q_M \sum_{K=0}^{N}\sum_{X=0}^{K}\sum_{Y=0}^{N-K}\sum_{O=0}^{M}\nonumber \\
&\sum_{K'=\max \lbrace X,O+K-M \rbrace}^{\min \lbrace N-Y,O+K \rbrace} \Lambda_{\rm{QND}}(N,M,K,X,Y,O,K')\nonumber \\
&\ketMode{K,N-K,O,M-O}{b'd}\nonumber \\
&\braMode{K',N-K',O+K-K',M-O-K+K'}{b'd}.
\end{align}
Then we obtain $\sigma'_{ab}$ by performing Charlie's BSM on modes $c$ and $d$ and applying Hadamard gates to the modes $a'$ and $b'$:
\begin{widetext}
\begin{align}\label{FinalStateX}
&\sigma'_{ab}=\sum_{n=0}^{\infty} \sum_{m=0}^{\infty}\sum_{N=0}^{\infty} \sum_{M=0}^{\infty}\,p_n\,q_m\,p_N\,q_M \sum_{k=0}^{n}\sum_{x=0}^{k}\sum_{y=0}^{n-k}\sum_{o=0}^{m}\sum_{k'=\max \lbrace x,o+k-m \rbrace}^{\min \lbrace n-y,o+k \rbrace}\sum_{K=0}^{N}\sum_{X=0}^{K}\sum_{Y=0}^{N-K}\sum_{O=0}^{M}\sum_{K'=\max \lbrace X,O+K-M \rbrace}^{\min \lbrace N-Y,O+K \rbrace}\nonumber \\
&\sum_{l=0}^{o}\sum_{q=0}^{m-o}\sum_{L=0}^{O}\sum_{Q=0}^{M-O}\sum_{l'=0}^{o+k-k'}\sum_{q'=0}^{m-o-k+k'}\sum_{L'=\max \lbrace 0,q+Q+l+L-q'-l'-M+O+K-K' \rbrace}^{\min \lbrace O+K-K',q+Q+l+L-q'-l'\rbrace}\sum_{\alpha'=0}^{l'+L'}\sum_{\beta'=\max \lbrace 0,1-\alpha' \rbrace}^{q+Q+l+L-l'-L'-\max \lbrace 0,1+\alpha'-l'-L'\rbrace}\nonumber \\
&\sum_{\alpha=\max \lbrace 0,\alpha'+\beta'-q-Q\rbrace}^{\min \lbrace l+L,\alpha'+\beta'\rbrace}\sum_{\varphi=0}^{o-l+O-L}\sum_{\varepsilon=0}^{m-o-q+M-O-Q}\sum_{\varphi'=\max \lbrace 0,\varphi+\varepsilon-m+o+k-k'-M+O+K-K'+q+Q+l+L-l'-L'\rbrace}^{\min \lbrace \varphi+\varepsilon,o+k-k'-l'+O+K-K'-L'\rbrace}G_{\rm{X}}\nonumber \\
&\sum_{\tau=0}^{k}\sum_{\nu=0}^{n-k}\sum_{\chi=0}^{K}\sum_{\omega=0}^{N-K}\sum_{\tau'=0}^{k'}\sum_{\nu'=0}^{n-k'}\sum_{\chi'=0}^{K'}\sum_{\omega'=0}^{N-K'}\frac{f_{\rm{H}}(n,N,k,K,\tau,\nu,\chi,\omega)f_{\rm{H}}(n,N,k',K',\tau',\nu',\chi',\omega')}{g(n,N,\tau+\nu,\omega+\chi,\tau'+\nu',\omega'+\chi')}\nonumber \\
&\ketbraMode{\tau+\nu,n-\tau-\nu,\omega+\chi,N-\omega-\chi}{\tau'+\nu',n-\tau'-\nu',\omega'+\chi',N-\omega'-\chi'}{ab},
\end{align}
\end{widetext}
where, for the sake of convenience, we introduced the following functions:
\begin{widetext}
\begin{align}\label{GX}
&G_{\rm{X}}\equiv G_{\rm{X}}(n,m,N,M,k,x,y,o,k',K,X,Y,O,K',l,q,L,Q,l',q',L',\alpha',\beta',\alpha,\varphi,\varepsilon,\varphi')=\Lambda_{\rm{QND}}(n,m,k,x,y,o,k')\nonumber \\
&\Lambda_{\rm{QND}}(N,M,K,X,Y,O,K')\,(\alpha'+\beta')!(\varphi+\varepsilon)!(l+L+q+Q-\alpha'-\beta')!\frac{(-1)^{L-L'-q'-q+\alpha-\alpha'-\varphi'-\varphi}}{2^{m+M}}\nonumber \\
&(m+M-l-L-q-Q-\varphi-\varepsilon)!\,f_{\rm{H}}(m,M,o+k-k',O+K-K',l',q',L',q+Q+l+L-q'-l'-L')\nonumber \\
&f_{\rm{H}}(m,M,o,O,l,q,L,Q)\binom{l+L}{\alpha}\binom{q+Q}{\alpha'+\beta'-\alpha}\binom{o-l+O-L}{\varphi} \binom{m-o-q+M-O-Q}{\varepsilon}\binom{l'+L'}{\alpha'}\nonumber \\
&\binom{q+Q+l+L-l'-L'}{\beta'}\binom{m-o-k+k'+M-O-K+K'-q-Q-l-L+l'+L'}{\varphi+\varepsilon-\varphi'}\nonumber \\
&\binom{o+k-k'-l'+O+K-K'-L'}{\varphi'}g(m,M,o,O,o+k-k',O+K-K')\,g(n,N,k,K,k',K')\nonumber \\
&d(m,M,\alpha'+\beta',q+Q+l+L-\alpha'-\beta',\eta'_{\rm{det}}),
\end{align}
\end{widetext}
and
\begin{align}\label{g}
&g(n,N,k,K,k',K')=\frac{1}{\sqrt{k!(n-k)!K!(N-K)!}}\nonumber \\
&\frac{1}{\sqrt{(k')!(n-k')!(K')!(N-K')!}},
\end{align} 
which enters into the expressions when the Fock-states are expressed with the creation/annihilation operators, and
\begin{align}\label{fH}
&f_{\rm{H}}(n,N,k,K,\tau,\nu,\chi,\omega)=\frac{(-1)^{-k-K-\nu-\omega}}{\sqrt{2}^{n+N}}\binom{k}{\tau}\binom{n-k}{\nu}\nonumber \\
&\binom{K}{\chi}\binom{N-K}{\omega},
\end{align}
which is introduced due to the Hadamard gates included in the implementation, and
\begin{align}\label{d}
&d(m,M,x,y,\eta)=x\,y\,\eta^2(1-\eta)^{m+M-2},
\end{align}
which comes from the successful detection pattern's POVM. 
With the formula for $\sigma'_{ab}$, given by Eq.~\eqref{FinalStateX}, and Eqs.~\eqref{pcX}-\eqref{pncX} we can obtain
\begin{widetext}
\begin{align}\label{pcFinalX}
&p_{\rm{c}}^{\rm{X}}=\sum_{n=0}^{\infty} \sum_{m=0}^{\infty}\sum_{N=0}^{\infty} \sum_{M=0}^{\infty}\,p_n\,q_m\,p_N\,q_M \sum_{k=0}^{n}\sum_{x=0}^{k}\sum_{y=0}^{n-k}\sum_{o=0}^{m}\sum_{k'=\max \lbrace x,o+k-m \rbrace}^{\min \lbrace n-y,o+k \rbrace}\sum_{K=0}^{N}\sum_{X=0}^{K}\sum_{Y=0}^{N-K}\sum_{O=0}^{M}\sum_{K'=\max \lbrace X,O+K-M \rbrace}^{\min \lbrace N-Y,O+K \rbrace}\nonumber \\
&\sum_{l=0}^{o}\sum_{q=0}^{m-o}\sum_{L=0}^{O}\sum_{Q=0}^{M-O}\sum_{l'=0}^{o+k-k'}\sum_{q'=0}^{m-o-k+k'}\sum_{L'=\max \lbrace 0,q+Q+l+L-q'-l'-M+O+K-K' \rbrace}^{\min \lbrace O+K-K',q+Q+l+L-q'-l'\rbrace}\sum_{\alpha'=0}^{l'+L'}\sum_{\beta'=\max \lbrace 0,1-\alpha' \rbrace}^{q+Q+l+L-l'-L'-\max \lbrace 0,1+\alpha'-l'-L'\rbrace}\nonumber \\
&\sum_{\alpha=\max \lbrace 0,\alpha'+\beta'-q-Q\rbrace}^{\min \lbrace l+L,\alpha'+\beta'\rbrace}\sum_{\varphi=0}^{o-l+O-L}\sum_{\varepsilon=0}^{m-o-q+M-O-Q}\sum_{\varphi'=\max \lbrace 0,\varphi+\varepsilon-m+o+k-k'-M+O+K-K'+q+Q+l+L-l'-L'\rbrace}^{\min \lbrace \varphi+\varepsilon,o+k-k'-l'+O+K-K'-L'\rbrace}G_{\rm{X}}\nonumber \\
&\sum_{\tau=0}^{k}\sum_{\nu=\max \lbrace 0,1-\tau \rbrace}^{n-k}\sum_{\chi=0}^{K}\sum_{\omega=0}^{N-\max \lbrace K,1+\chi \rbrace}\sum_{\tau'=\max \lbrace 0,\tau+\nu-n+k' \rbrace}^{\min \lbrace k', \tau+\nu\rbrace}\sum_{\chi'=\max \lbrace 0, \omega+\chi-N+K'\rbrace}^{\min \lbrace K', \omega+\chi\rbrace}\frac{f_{\rm{H}}(n,N,k,K,\tau,\nu,\chi,\omega)}{g(n,N,\tau+\nu,\omega+\chi,\tau+\nu,\omega+\chi)}
\nonumber \\
&f_{\rm{H}}(n,N,k',K',\tau',\tau+\nu-\tau',\chi',\omega+\chi-\chi')\,d(n,N,\tau+\nu,N-\omega-\chi,\eta_{\rm{det}}),
\end{align}
\end{widetext}
and
\begin{widetext}
\begin{align}\label{pncFinalX}
&p_{\rm{nc}}^{\rm{X}}=\sum_{n=0}^{\infty} \sum_{m=0}^{\infty}\sum_{N=0}^{\infty} \sum_{M=0}^{\infty}\,p_n\,q_m\,p_N\,q_M \sum_{k=0}^{n}\sum_{x=0}^{k}\sum_{y=0}^{n-k}\sum_{o=0}^{m}\sum_{k'=\max \lbrace x,o+k-m \rbrace}^{\min \lbrace n-y,o+k \rbrace}\sum_{K=0}^{N}\sum_{X=0}^{K}\sum_{Y=0}^{N-K}\sum_{O=0}^{M}\sum_{K'=\max \lbrace X,O+K-M \rbrace}^{\min \lbrace N-Y,O+K \rbrace}\nonumber \\
&\sum_{l=0}^{o}\sum_{q=0}^{m-o}\sum_{L=0}^{O}\sum_{Q=0}^{M-O}\sum_{l'=0}^{o+k-k'}\sum_{q'=0}^{m-o-k+k'}\sum_{L'=\max \lbrace 0,q+Q+l+L-q'-l'-M+O+K-K' \rbrace}^{\min \lbrace O+K-K',q+Q+l+L-q'-l'\rbrace}\sum_{\alpha'=0}^{l'+L'}\sum_{\beta'=\max \lbrace 0,1-\alpha' \rbrace}^{q+Q+l+L-l'-L'-\max \lbrace 0,1+\alpha'-l'-L'\rbrace}\nonumber \\
&\sum_{\alpha=\max \lbrace 0,\alpha'+\beta'-q-Q\rbrace}^{\min \lbrace l+L,\alpha'+\beta'\rbrace}\sum_{\varphi=0}^{o-l+O-L}\sum_{\varepsilon=0}^{m-o-q+M-O-Q}\sum_{\varphi'=\max \lbrace 0,\varphi+\varepsilon-m+o+k-k'-M+O+K-K'+q+Q+l+L-l'-L'\rbrace}^{\min \lbrace \varphi+\varepsilon,o+k-k'-l'+O+K-K'-L'\rbrace}G_{\rm{X}}\nonumber \\
&\sum_{\tau=0}^{k}\sum_{\nu=\max \lbrace 0,1-\tau \rbrace}^{n-k}\sum_{\chi=0}^{K}\sum_{\omega=\max \lbrace 0,1-\chi \rbrace}^{N-K}\sum_{\tau'=\max \lbrace 0,\tau+\nu-n+k' \rbrace}^{\min \lbrace k', \tau+\nu\rbrace}\sum_{\chi'=\max \lbrace 0, \omega+\chi-N+K'\rbrace}^{\min \lbrace K', \omega+\chi\rbrace}\frac{f_{\rm{H}}(n,N,k,K,\tau,\nu,\chi,\omega)}{g(n,N,\tau+\nu,\omega+\chi,\tau+\nu,\omega+\chi)}
\nonumber \\
&f_{\rm{H}}(n,N,k',K',\tau',\tau+\nu-\tau',\chi',\omega+\chi-\chi')\,d(n,N,\tau+\nu,\omega+\chi,\eta_{\rm{det}}).
\end{align}
\end{widetext}
Note that the differences between $p_{\rm{c}}^{\rm{X}}$ and $p_{\rm{nc}}^{\rm{X}}$ are in the limits of the summation index $\omega$ and in the arguments of the function $d$. Moreover, we note that we have found strong numerical evidence that $p_{\rm{c}}^{\rm{X}}=p_{\rm{c}}^{\rm{Z}}$ and $p_{\rm{nc}}^{\rm{X}}=p_{\rm{nc}}^{\rm{Z}}$ hold, but we have not been able to show it analytically by comparing Eq.~\eqref{pcFinal} and Eq.~\eqref{pncFinal} to Eq.~\eqref{pcFinalX} and Eq.~\eqref{pncFinalX}.

Now, with Eq.~\eqref{probabilityQNDZ}, Eq.~\eqref{pcFinal}, Eq.~\eqref{pncFinal}, Eq.~\eqref{pcFinalX} and Eq.~\eqref{pncFinalX} we have all the quantities that are required in order to evaluate the secret key rate of the protocol, which is given by Eq.~\eqref{AdequateSkrAppendix}. 

\subsection{Derivation of the secret key rate in the case of $\eta_{\rm{det}}=1$ and $\tau=0$}\label{app:Unit}
The unit efficiency secret key rate formula can be obtained by noting that in the quantities $p_{\rm{QND}}$, $p_{\rm{c}}^{\rm{Z}}$, $p_{\rm{nc}}^{\rm{Z}}$, $p_{\rm{c}}^{\rm{X}}$ and $p_{\rm{nc}}^{\rm{X}}$, given by Eq.~\eqref{probabilityQNDZ}, Eq.~\eqref{pcFinal}, Eq.~\eqref{pncFinal}, Eq.~\eqref{pcFinalX} and Eq.~\eqref{pncFinalX} only those terms in which the power of $(1-\eta_{\rm{det}})$ and $(1-\eta'_{\rm{det}})$ is $0$ contribute to the sums. This gives restriction on the indices appearing in the power of the terms $(1-\eta_{\rm{det}})$ and $(1-\eta'_{\rm{det}})$. Systematically examining the different cases one by one, we can rule out most of possibilities for the indices. For this, we need to keep in mind that if a sum happens to have a smaller number in the upper limit than that in the lower limit, then the corresponding term gives no contribution to the sum. 

Using the recipe given above and the formulas Eq.~\eqref{probabilityQNDZ}, Eq.~\eqref{pcFinal}, Eq.~\eqref{pncFinal}, Eq.~\eqref{pcFinalX} and Eq.~\eqref{pncFinalX}, we have that
\begin{equation}\label{unitPQnd}
p^{[\eta_{\rm{det}}=1,\,\tau=0]}_{\rm{QND}}=\frac{p_1\,(2\,q_2\,(1-\eta_{\rm{ch}})+3\,q_1\,\eta_{\rm{ch}})}{48},
\end{equation}
\begin{equation}\label{unitPc}
p^{\rm{Z}[\eta_{\rm{det}}=1,\,\tau=0]}_{\rm{c}}=p^{\rm{X}[\eta_{\rm{det}}=1,\,\tau=0]}_{\rm{c}}=\frac{p_1^2\,q_1^2\,\eta^2_{ch}}{1024},
\end{equation}
and
\begin{equation}\label{unitPnc}
p^{\rm{Z}[\eta_{\rm{det}}=1,\,\tau=0]}_{\rm{nc}}=p^{\rm{X}[\eta_{\rm{det}}=1,\,\tau=0]}_{\rm{nc}}=0.
\end{equation}
It is interesting that $p_{\rm{nc}}^{\rm{Z}}$ turns out to be $0$ when we set $\eta_{\rm{det}}=1$ and $\tau=0\,\rm{s}$. In the case considered in the calculations above, Alice and Bob both measured $H$ (horizontal) polarizations in their $Z$ measurement and the two QND measurements succeeded on both side. This means that Charlie's BSM will get input signals in $V$ (vertical) polarizations from both sides. An error can only occur when they apply the bit flip (see step 5 of the protocol), which only happens when Charlie detects a singlet state. But due to the Hong-Ou-Mandel effect the photons will always go to the same arm, therefore in this ideal case Charlie cannot obtain a singlet detection. The same argument holds for the $X$-basis as well. Plugging Eq.~\eqref{unitPQnd}, Eq.~\eqref{unitPc} and Eq.~\eqref{unitPnc} into Eq.~\eqref{AdequateSkrAppendix} we obtain the secret key rate formula for the unit detection efficiency case, given by Eq.~\eqref{skrFormulaUnitEff}.

\bibliographystyle{apsrev}

\end{document}